\documentclass[11pt]{article}
\hoffset -1.3cm \voffset -1.2cm \textheight 225mm \textwidth 155mm
\usepackage{color}
\usepackage{graphicx}
\usepackage{amsmath}
\usepackage{amssymb}
\usepackage{mathrsfs}
\usepackage[numbers,sort&compress]{natbib}
\usepackage{amsthm}
\numberwithin{equation}{section}
\bibliographystyle{unsrt}
\usepackage{pgf}
\usepackage{CJK}
\usepackage{graphicx}
\usepackage{tikz}
\usepackage{stmaryrd}
\usepackage{graphicx}
\usepackage[unicode]{hyperref}
\usepackage[latin1]{inputenc}
\usepackage[T1]{fontenc}
\usepackage[english]{babel}
\usepackage{listings}
\usepackage{xcolor,mathrsfs,url}
\usepackage{amssymb}
\usepackage{amsmath}
\usepackage{ifthen}
\newtheorem{theorem}{Theorem}
\newtheorem{lemma}{Lemma}
\newtheorem{remark}{Remark}

\newtheorem{Proposition}{Proposition}
\newtheorem{RHP}{RHP}

\usepackage {caption}
\usepackage{float}
\usepackage{subfigure}
\hypersetup{hypertex=true,
	colorlinks=true,
	linkcolor=blue,
	anchorcolor=blue,
	citecolor=red}

\DeclareMathOperator*{\res}{Res}

\begin{document}

	\baselineskip=15pt

	\title{\Large A Riemann-Hilbert approach   to the  two-component modified Camassa-Holm equation }
	\author{\Large Kai Xu$^1$, \  Luman Ju$^1$ and  \  Engui Fan$^{1}$}
	\footnotetext[1]{ \  School of Mathematical Sciences  and Key Laboratory for Nonlinear Science, Fudan University, Shanghai 200433, P.R. China.}
	\footnotetext[2]{ \   Email address:\ Kai XU: \ 22110180047@fudan.edu.cn; Luman JU: 21110180009@m.fudan.edu.cn; Engui FAN:  faneg@fudan.edu.cn}
	\date{ }
	\maketitle
	
	\begin{abstract}
		\baselineskip=16pt
In this paper,  we develop   a Riemann-Hilbert (RH)  approach to the Cauchy problem for the two-component modified Camassa-Holm (2-mCH) equation based on its Lax pair.
Further via  a series of deformations to the  RH  problem by using the $\bar{\partial}$-generalization of Deift-Zhou steepest descent method, we obtain  the long-time asymptotic approximations to the solutions   of the 2-mCH equation in four kinds of  space-time regions.  Especially we introduce a technique to unify
multi-jump matrix factorizations  into one form   which can greatly  simplify
the calculation  of the $\bar{\partial}$-steepest descent method.
\\[6pt]
 {\bf Keywords:}   two-component modified Camassa-Holm equation, inverse scattering transform,  Riemann-Hilbert problem,    $\overline\partial$-steepest descent method, long-time asymptotics.\\[6pt]
 {\bf MSC 2020:} 35Q51; 35Q15; 37K15; 35C20.
    \end{abstract}

\tableofcontents

\section{Introduction}
\qquad In this paper, we develop a  Riemann-Hilbert (RH) approach  to  the Cauchy problem of the two-component modified Camassa-Holm (2-mCH) equation
\cite{JCZ}
\begin{align}
&(u-u_{xx})_t+[(u-u_x)(v+v_x)(u-u_{xx})]_x=0,  \label{2mch1}\\
&(v-v_{xx})_t+[(u-u_x)(v+v_x)(v-v_{xx})]_x=0,  \label{2mch2}\\
& u(x,0)=u_0(x), \ \   v(x,0)=v_0(x),\label{2mch3}
\end{align}
with the nonzero boundary conditions
\begin{equation}
u_0(x)\to 1,\quad v_0(x)\to1, \ \ x\to \pm \infty. \label{2mch4}
\end{equation}

The 2-mCH equation \eqref{2mch1}-\eqref{2mch2}   was proposed in  \cite{JCZ}
as a two-component integrable extension of the classical  mCH equation
\begin{align}
	&(u-u_{xx})_{t}+\left[(u-u_{xx})\left(u^{2}-u_{x}^{2}\right)\right]_{x} =0,  \label{mch}
\end{align}	
which  was first  presented   by Fokas \cite{Foks}  and Fuchssteiner  using  recursion operators \cite{Fuch}, and later  found  by Olver and
Rosenau \cite{PP}   via   tri-Hamiltonian duality to the bi-Hamiltonian of the mKdV
equation. Because of its famous mathematical structure and properties, the mCH equation (\ref{mch})  has gained lots of attention.
For example, Olver et.al found the  bi-Hamiltonian structure of the mCH equation \cite{PP,mch1}.
  Kang et.al presented the Liouville-type transformation  relating the isospectral problems for the mCH equation,
  and introduced a Miura-type map from the mCH equation to the CH equation \cite{mch0}.
  Boutet de Monvel et.al studied the long-time asymptotics for the CH equation  and the Degasperis-Procesi  equation \cite{BM1,BM2,BM3}, and then developed a RH approach and obtained  the long-time asymptotics for the Cauchy problem of the mCH equation \cite{mch7,BM4}.  
Yang and Fan discussed the long-time asymptotic behavior for the
Cauchy problem of the   mCH   equations  in
the solitonic regions    \cite{mch8}.
More works on the the CH,   mCH and DP  equations  can be found in   \cite{CH1,CH2,CH3,CH4,CH5,CH6,CH7,CH8,CH9,CH10,CH11,CH12,CH13,CH14,CH15,CH16,CH17,CH18,mch2,mch3,mch4,mch5,mch6}.

In recent years, the 2-mCH equation \eqref{2mch1}-\eqref{2mch2} also draws some attentions. For example,
the peakon   and multi-peakon solutions were constructed in  \cite{xia,2mch2};
   Tian and Liu obtained the 2-mCH equation and found its bi-Hamiltonian structure by applying tri-Hamiltonian duality into the bi-Hamiltonian representation of the Wadati-Konno-Ichikawa equation    \cite{KQP};  Chang et al presented an explicit construction of peakon solutions for the 2-mCH equation \eqref{2mch1}-\eqref{2mch2}, and discussed the global existence and the large-time asymptotics of the peakon solutions   \cite{CHS}. Hay et al found the bi-Hamiltonian structure of the 2-mCH system and proved that equation \eqref{2mch1}-\eqref{2mch2} possess infinite conservation laws \cite{hay}.
Zhang and Qiao  studied the periodic Cauchy problem for the 2-mCH equation \eqref{2mch1}-\eqref{2mch2}  \cite{qiao2020}.

 To the best of our knowledge,  the inverse scattering transform or Riemann-Hilbert 
 (RH) method  have not yet been applied to the  2-mCH  equation  \eqref{2mch1}-\eqref{2mch2}.
The purpose of our  paper is to develop   a   RH  method  to  the Cauchy problem   (\ref{2mch1})-(\ref{2mch4})
   and further analyze the  long-time asymptotic behavior 
 of the solution for the  2-mCH  equation.

We define a transformation
\begin{equation}
p(x,t):=u(x+t,t)-u_x(x+t,t),\quad
q(x,t):=v(x+t,t)+v_x(x+t,t),
\end{equation}
then  the Cauchy problem  (\ref{2mch1})-(\ref{2mch4})
is changed into the following Cauchy problem that we consider in our paper
\begin{eqnarray}
&&m_t +(wm)_x=0,\ \ n_t +(wn)_x=0, \ \ t>0,x\in\mathbb{R}\label{2mch-p}\\
&&  m=p+p_x,\ \ n=q-q_x,\ \ w=pq-1,\label{2mch-q}\\
&&p(x,0)=p_0(x),\ q(x,0)=q_0(x),\label{2mch-o}
\end{eqnarray}
with $p_0,q_0\in 1+H^{3,2}(\mathbb{R})$ satisfy  the following boundary  condition
  \begin{equation}\label{bdy}
\ p_0(x)\to 1,\ q_0(x)\to 1,\quad x\to \pm \infty,
\end{equation}
where the weighted Sobolev space $H^{3,2}(\mathbb{R})$ is defined by
\begin{equation}
H^{3,2}(\mathbb{R}):=\{u(x) \in L^2(\mathbb{R}): \langle \cdot \rangle^2 \partial^j_xu\in L^2(\mathbb{R}),\   j=0,1,2,3\}. \nonumber
\end{equation}

Our aim is to prove the following  results.

\begin{theorem}\label{last}   Let $p(x,t)$ and $q(x,t)$  be the solutions  for  the Cauchy  problem (\ref{2mch-p})-(\ref{bdy}) associated with generic data
  $p_0, q_0 \in 1+H^{3,2}(\mathbb{R})$  such that $m(x,t), n(x,t)>0 $.
Then  as  for all $ t\to\infty$,  we obtain the following asymptotic expansions

{\rm I.}  In  the region ${\xi}\in(-\infty,-1/4)\cup(2,+\infty)$ with     $  {\xi}=x/t$,
\begin{align}
&( \log p )_x  = p^{sol}(x,t) +\mathcal{O}(t^{-1+\kappa}),\nonumber\\
&( \log q )_x  =q^{sol}(x,t) +\mathcal{O}(t^{-1+\kappa}),\nonumber
\end{align}
where $0<\kappa<1/2$, $p^{sol}(x,t)$ and $q^{sol}(x,t)$ are functions associated with solitons given by \eqref{pr1}-\eqref{qr1}, and
$$ x(y,t)=y +S_1(y,t)+\mathcal{O}(t^{-1+\kappa}).
$$
with $S_1(y,t)$ being given by \eqref{s1},  and  $y$ being  a new spatial  variable given by 
the reciprocal transformation    \eqref{2.39}.  

{\rm II.}   In  the region ${\xi}\in(-1/4,0)\cup (0, 2)$,
\begin{align}
&( \log p )_x  =   p^{sol}(x,t) +g_1(x,t)t^{-1/2}+\mathcal{O}(t^{-1+\kappa}),\nonumber\\
&( \log q )_x  =  q^{sol}(x,t) +g_1(x,t)t^{-1/2}+\mathcal{O}(t^{-1+\kappa}),\nonumber
\end{align}
where $0<\kappa<1/2$, $p^{sol}(x,t)$ and $q^{sol}(x,t)$ are functions associated with solitons given by \eqref{pqr2}, $g_1(x,t)$ and $g_2(x,t)$ are
the interaction of the discrete spectrum with the continuous spectrum given by \eqref{g1}-\eqref{g2} and
$$
x(y,t)=y +S_2(y,t)+\mathcal{O}(t^{-1+\kappa}),$$
with $S_2(y,t)$ being given by \eqref{s2}.

\end{theorem}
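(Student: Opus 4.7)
The proof plan follows the standard $\bar{\partial}$-nonlinear steepest descent framework tailored to the RH problem built from the 2-mCH Lax pair earlier in the paper. First I would identify the phase function $\theta(k;\xi)$ arising from the oscillatory factor in the Jost solutions, and analyse its stationary points as functions of $\xi=x/t$. The four-region decomposition $(-\infty,-1/4)$, $(-1/4,0)$, $(0,2)$, $(2,+\infty)$ should be dictated precisely by how many real stationary points of $\theta$ emerge on the spectral contour and how the sign chart of $\mathrm{Re}(i\theta)$ changes across it; the transition values $\xi\in\{-1/4,0,2\}$ should be exactly where saddles collide or escape to infinity or to the singular points of the RH problem. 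This classifies Region I as solitonic (no real stationary points) and Region II as the Zakharov--Manakov oscillatory regime with nontrivial continuous-spectrum contributions.

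Next I would carry out the sequence of deformations on the original RH problem $M(k)$. The first step introduces a scalar function $\delta(k;\xi)$ solving an auxiliary scalar RH problem on the portion of the contour where $\mathrm{Re}(i\theta)$ changes sign; conjugating by $\delta^{\sigma_3}$ renormalises the jump into triangular factors compatible with the sign chart. The technique advertised in the abstract, namely unifying the multi-jump factorizations into one form, should be implemented here: rather than writing separate factorizations on each subarc, one packages all triangular pieces together so that after conjugation the jump assumes a single canonical shape, which avoids repeating the $\bar\partial$-extension argument on each subarc separately. The reflection coefficient is then continued off the contour to smooth functions supported in appropriate wedges, producing a new unknown $M^{(2)}$ whose jumps decay exponentially in $t$ and whose $\bar\partial$-derivative is controlled in terms of the $H^{3,2}$ norm of the initial data.

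Third I would decompose $M^{(2)}=M^{(3)} E$ where $M^{(3)}$ solves the pure RH part (discrete spectrum plus exponentially small jumps) and $E$ solves a $\bar\partial$-problem. In Region I, $M^{(3)}$ reduces to the soliton RH problem $M^{sol}$, and the reconstruction via residues yields $p^{sol},q^{sol}$; the $\bar\partial$-equation for $E$ is a small-norm integral equation of size $\mathcal{O}(t^{-1+\kappa})$ by the usual combination of stationary phase and $L^p$ estimates on the smooth extension. In Region II, $M^{(3)}$ is further split as $M^{sol}$ multiplied by a parabolic cylinder parametrix localised at each real saddle point; matching these local models to the soliton background generates the $g_1(x,t)t^{-1/2}$ correction by the standard parabolic cylinder computation. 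Finally, translating the asymptotics back to $(u,v)$ requires inverting the reciprocal change of variable $x=x(y,t)$, yielding the shift functions $S_1,S_2$ and the formulas stated in the theorem.

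The main obstacle, in my view, is twofold. First, making the unified factorization genuinely work across all four regions requires a choice of $\delta$ that is simultaneously compatible with the discrete residues, with the possibly multiple or colliding saddles, and with the singularities of the CH-type spectral problem (typically at $k=0$ or at fixed points of the symmetry $k\mapsto 1/k$); controlling $\delta$ uniformly near these singularities and quantifying its effect on the reconstruction is delicate. Second, propagating the $\mathcal{O}(t^{-1+\kappa})$ error through the reciprocal transformation is nontrivial, since one must show that the implicit relation $x=x(y,t)$ is uniformly invertible in $y$ with the same error rate, and that $S_1,S_2$ inherit the full asymptotic structure. These two points are where I expect the bulk of the technical work to concentrate.
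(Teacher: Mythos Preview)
Your proposal is essentially correct and follows the same $\bar\partial$-steepest descent architecture as the paper: phase analysis of $\theta(k;\xi)$, conjugation by a scalar function (the paper's $T(k)$, your $\delta$), continuous extensions off $\mathbb{R}$, decomposition into a pure RH piece (soliton model in Region~I, soliton times local parabolic-cylinder parametrices in Region~II) and a pure $\bar\partial$ piece contributing $\mathcal{O}(t^{-1+\kappa})$.

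Two points where the paper's execution differs from what you sketch. First, the reconstruction is not done by ``inverting'' the reciprocal map $y\mapsto x$; instead, both $(\log p)_x$, $(\log q)_x$ and $x(y,t)$ are read off simultaneously from the expansion of $M(k)$ at the \emph{interior} point $k=i$ (not $k=\infty$), via the formula \eqref{restr} derived from the $k\to i$ asymptotics of the Jost solutions. So your second ``main obstacle'' is largely bypassed: the solution is parametrized by $(y,t)$ and $x(y,t)$ comes out directly, with the error inherited from the $k=i$ expansion of $M^{(3)}$ and $E$. Second, before any of the $T$-conjugation begins, the paper inserts an algebraic transformation $M=F(k)M^J$ to remove the first-order poles of $M$ at $k=\pm1$; your mention of ``singularities of the CH-type spectral problem'' is on target, but in practice this is handled by this explicit rational gauge rather than by tuning $\delta$.
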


The key tool to  prove  Theorem \ref{last} is the $\bar{\partial}$-generalization \cite{McL1,McL2} of the  Deift-Zhou steepest descent method \cite{DZ1,DZ2}.
In recent years, this method
	has been   successfully  used    to investigate  long-time asymptotics,  soliton
	resolution and  asymptotic stability of N-soliton solutions to
	integrable systems in a weighted Sobolev space
	\cite{DandMNLS,fNLS,Liu3,SandRNLS,YF1,YF3,YF4}.

\begin{remark}
    Compared with  the RH method to the   mCH  equation  (\ref{mch})    \cite{mch7, mch8}, it is found  that  extending the
      RH method  to the 2-mCH  equation \eqref{2mch-p}-\eqref{2mch-q} is not trivial from
the following three different  prominent features.

\begin{itemize}
\item[ $\blacktriangleright$]  In  the Lax pair (\ref{lax1})  of the  2-mCH equation,    the  off-diagonal elements for the matrices   $U$ and $V$
 are not symmetric, we   construct a  matrix transformation  (\ref{symeg})   to change the Lax pair (\ref{lax1})  into a new  one (\ref{lax2}).

\item[ $\blacktriangleright$]   The reconstruction formula  for  the   mCH  equation  (\ref{mch})
was   derived   from a reciprocal transformation   in \cite{mch7}.   However, the independent  reconstruction formulas of
 $m$ and $n$  for  the 2-mCH  equation \eqref{2mch-p}-\eqref{2mch-q} cannot be obtained from
our  reciprocal transformation (\ref{2.39}).  To overcome this difficulty,  we  apply the asymptotic expansion at $k=i$
 of the solution  $M(k)$ for the  RHP \ref{rhp1} to obtain  a suitable reconstruction  formula   (\ref{restr}).

\item[ $\blacktriangleright$]  In addition, we introduce a technique to unify
multi-jump matrix factorizations  into one form (\ref{v1}) and (\ref{v2})  which can greatly  simplify
the calculation  of the $\bar{\partial}$-steepest descent method. Especially   to  the case with   8 phase points  on $\mathbb{R}$ in Section \ref{sec4}, if by
 the way  used in  \cite{mch8},   there will be   32     opened  jump lines,
     32   regions  and  32  extension functions     for
  the 2-mCH equation \eqref{2mch-p}-\eqref{2mch-q},     we   cut  down them  from  the number 32   to  the   number 2.

\end{itemize}

\end{remark}

This paper is organized as follows.  In section \ref{sec2}, we  carry out direct scattering transform based on   the Lax pair of the 2-mCH equation,
and then  establish  a  RH  problem related to the  Cauchy problem \eqref{2mch-p}-\eqref{2mch-o}.
By  a series of deformations to the RH problem with  the $\bar{\partial}$-steepest descent method in Section \ref{sec3} and Section \ref{sec4}, respectively,  we
  then obtain   long-time asymptotic behavior for the solution of the 2-mCH equation in two regions without phase points and  two regions with  phase points.

\section{Inverse scattering transform } \label{sec2}

In this section,   we   study  the direct scattering transform and the  RH problem associated with the Cauchy problem
 (\ref{2mch-p})-(\ref{2mch-o}).

\subsection{Spectral analysis on Lax pair}

\quad
 The 2-mCH equation \eqref{2mch-p}-\eqref{2mch-q}  admits  the following Lax pair
\begin{equation}\label{lax1}
\Psi_x=U\Psi,\quad \Psi_t=V\Psi,
\end{equation}
where
\begin{eqnarray}
&&U=\frac{1}{2}\left(\begin{array}{cc}
-1&\lambda m\\-\lambda n&1\end{array}\right),\\
&&V=\left(\begin{array}{cc}
\lambda^{-2}+\frac{w}{2}&-\lambda^{-1}p-\frac{\lambda}{2}wm\\
\lambda^{-1}q+\frac{\lambda}{2}wn&-\lambda^{-2}-\frac{w}{2}
\end{array}\right).\qquad
\end{eqnarray}

To keep the  symmetry and zero trace of  the matrix $U$ in Lax pair (\ref{lax1}),   we first take the transformation
\begin{equation}
\hat{\Psi}(x,t,\lambda)=A(x,t)\Psi(x,t,\lambda) \label{symeg}
\end{equation}
with
\begin{equation}
A(x,t)=(mn)^{-\frac{1}{4}}\left(\begin{array}{cc}
\sqrt{m}&0\\0&\sqrt{n}\end{array}\right).
\end{equation}
Then the Lax pair \eqref{lax1} changes into
\begin{equation}\label{lax2}
\hat{\Psi}_x=\hat{U}\hat{\Psi},\quad \hat{\Psi}_t=\hat{V}\hat{\Psi},
\end{equation}
where
\begin{eqnarray}
&&\hat{U}=\frac{1}{2}\left(\begin{array}{cc}-1&\lambda\sqrt{mn}\\
-\lambda\sqrt{mn}&1\end{array}\right)+\left(\begin{array}{cc}\frac{mn_x-m_xn}{4mn}&0\\
0&-\frac{mn_x-m_xn}{4mn}\end{array}\right),\\
&&\hat{V}=\left(\begin{array}{cc}\lambda^{-2}+\frac{w}{2}&-\sqrt{\frac{n}{m}}[\lambda^{-2}p+\frac{\lambda}{2}wm]\\
\sqrt{\frac{m}{n}}[\lambda^{-1}q+\frac{\lambda}{2}wn]&-\lambda^{-2}-\frac{w}{2}\end{array}\right)
+\left(\begin{array}{cc}
\frac{mn_t-m_tn}{4mn}&0\\0&-\frac{mn_t-m_tn}{4mn}\end{array}\right).\qquad\quad
\end{eqnarray}
Notice that the coefficients of the Lax pair \eqref{lax2} have singularities at $\lambda=0$ and $\lambda =\infty$. To have a good control on the asymptotic
behavior of eigenfunctions, we use two different transformations to get new forms of \eqref{lax2}  respectively.
\subsubsection{Jost solutions at $\lambda=\infty$}

\quad Make the transformation
\begin{equation}
\hat{\Phi}(x,t,\lambda)=D(\lambda)\hat{\Psi}(x,t,\lambda)
\end{equation}
 with
\begin{equation}
D(\lambda)=\left(\begin{array}{cc}
1&-\frac{ \lambda}{1+\sqrt{1- \lambda^2}}\\
-\frac{ \lambda}{1+\sqrt{1- \lambda^2}}&1
\end{array}\right), \ \  \lambda=\frac{1}{2 }(k+k^{-1}),
\end{equation}
then $\hat{\Phi}$ satisfies the following Lax pair
\begin{equation}\label{lax2-1}
\left\{\begin{array}{lr}
\hat{\Phi}_x+\frac{i(k^2-1)}{4 k}\sqrt{mn}\sigma_3\hat{\Phi}=\tilde{U}\hat{\Phi},\\
\hat{\Phi}_t-\left[\frac{i(k^2-1)}{4 k}w\sqrt{mn}+\frac{2i k(k^2-1)}{(k^2+1)^2}\right]\sigma_3\hat{\Phi}=\tilde{V}\hat{\Phi},
\end{array}
\right.
\end{equation}
where
\begin{eqnarray}
&&\tilde{U}(k)=-\frac{ik}{k^2-1}\left( \sqrt{mn}-1  +\frac{mn_x-m_xn}{2mn}\right)\sigma_3-\frac{i(k^2+1)}{2(k^2-1)}\left(\sqrt{mn}-1+\frac{mn_x-m_xn}{2mn}\right)
\left(\begin{array}{cc}0&1\\-1&0\end{array}\right),\nonumber\\
&&\tilde{V}(k)=\frac{ik}{k^2-1}\left( w(\sqrt{mn}-1)-\frac{mn_t-m_tn}{2mn}\right)\sigma_3+\frac{i k}{k^2-1}\left(
\sqrt{\frac{n}{m}}p+\sqrt{\frac{m}{n}}q-2 \right)\sigma_3\nonumber\\
&&\qquad\quad\  +\frac{i(k^2+1)}{2(k^2-1)}\left( w(\sqrt{mn}-1 ) -\frac{mn_t-m_tn}{2mn}\right)\left(\begin{array}{cc}
0&1\\-1&0\end{array}\right)-\frac{4i k^2}{k^4-1}\left(\begin{array}{cc}0&1\\-1&0\end{array}\right)\nonumber\\
&&\qquad\quad\ +\left(\begin{array}{cc}0&-\frac{ k(k-i)p}{(k^2-1)(k+i)}\sqrt{\frac{n}{m}}+\frac{ k(k+i)q}{(k^2-1)(k-i)}\sqrt{\frac{m}{n}}\\
\frac{ k(k-i)q}{(k^2-1)(k+i)}\sqrt{\frac{m}{n}}-\frac{ k(k+i)p}{(k^2-1)(k-i)}\sqrt{\frac{n}{m}}&0\end{array}\right).\nonumber
\end{eqnarray}

By the conservation law
\begin{equation}
-(\sqrt{mn})_t=(w\sqrt{mn})_x,
\end{equation}
we define a new function
\begin{equation}
\tilde{p}(x,t,k)=-\frac{(k^2-1)}{4 k}\int_x^{\infty}(\sqrt{mn}-1)\mathrm{d}s+
\frac{(k^2-1)}{4k}x-\frac{2 k(k^2-1)}{(k^2+1)^2}t, \label{2.4}
\end{equation}
then
\begin{equation}
\tilde{p}_x=\frac{(k^2-1)}{4 k}\sqrt{mn},\quad
\tilde{p}_t=-\frac{(k^2-1)}{4 k}w\sqrt{mn}-\frac{2 k(k^2-1)}{(k^2+1)^2},
\end{equation}
and $\tilde{p}_{xt}=\tilde{p}_{tx}$.

As $x\to \pm \infty$, $\tilde{U}, \tilde{V}\to 0$, thus $\hat{\Phi}\sim e^{-i\tilde{p}(x,t,k)\sigma_3}$.
Take the transformation
\begin{equation}\label{trs1}
\Phi(x,t,k)=\hat{\Phi}(x,t,k)e^{i\tilde{p}(x,t,k)\sigma_3},
\end{equation}
then $\Phi(x,t,k)\to I,\ x\to\infty$
and $\Phi$ satisfies the new Lax pair
\begin{equation}\label{Philax}
\Phi_x+i\tilde{p}_x[\sigma_3,\Phi]=\tilde{U}\Phi,\quad \Phi_t+i\tilde{p}_t[\sigma_3,\Phi]=\tilde{V}\Phi.
\end{equation}

Integrating the first equation of \eqref{Philax} with respect to $x$ in two directions, we get the Jost solutions $\Phi_{\pm} $ admitting the following Volterra integral equations
\begin{equation}\label{int}
\Phi_{\pm}(x,t,k) =I+\int_{\pm\infty}^x e^{-\frac{i(k^2-1)}{4 k}((x-y)+\int_y^x(\sqrt{mn}-1)\mathrm{d}s)\hat{\sigma_3}}\left[\tilde{U}(y,t,k)\Phi_{\pm}(y,t,k)\right]\mathrm{d}y.
\end{equation}

Denote
	$\Phi_\pm  =\left( \Phi_\pm^{(1)},  \Phi_\pm^{(2)}\right), $
	where  $\Phi_\pm^{(1)}$ and $\Phi_\pm^{(2)} $ are
	the first and second columns of $\Phi_\pm  $, respectively. By the equations \eqref{int}, $\Phi_{\pm} $ have the following properties
\begin{Proposition}\label{prop1}
 Suppose the initial data $p_0, q_0  \in  1+H^{3,2}(\mathbb{R})$,  then we have
\begin{itemize}
  \item[1.] $\Phi_+^{(1)}(k),\ \Phi_-^{(2)}(k)$ are analytical in the lower-half complex plane
  $\mathbb{C}^-$;
   $\Phi_+^{(2)}(k),\ \Phi_-^{(1)}(k)$ are
   analytical in the upper-half complex plane $\mathbb{C}^+$.
  \item[2.]The Jost functions $\Phi_{\pm}(k)$ admit the following symmetries
\begin{equation}
\Phi(k)=\overline{\Phi(\bar{k}^{-1})}=\sigma_1\overline{\Phi(\bar{k})}\sigma_1=\sigma_2\Phi(-k)\sigma_2.\\
\end{equation}
  \item[3.]Asymptotic behaviors:
  As $k\to\pm\infty$,
\begin{equation}
\left(\Phi_-^{(1)}(k),  \Phi_+^{(2)}(  k)\right)\to I,\quad \left(\Phi_+^{(1)}(k),  \Phi_-^{(2)}(k)\right)\to I;
\end{equation}
As $k\to\pm 1$,
\begin{equation}\label{asym2}
\Phi(x,t,k)=\pm\frac{i}{k\mp1}\alpha_\pm(x,t)\left(\begin{array}{cc}
\pm1&1\\-1&\mp1\end{array}\right)+\mathcal{O}(1),
\end{equation}
where $\alpha_{\pm}(x,t)$ are real-valued functions.
\end{itemize}
\end{Proposition}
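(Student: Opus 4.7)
The plan is to treat the three claims separately, throughout working from the Volterra representation \eqref{int} whose kernel carries the scalar phase $\alpha(x,y,k)=\frac{k^2-1}{4k}\tau(x,y)$ with $\tau(x,y)=\int_y^x\sqrt{mn}\,\mathrm{d}s$. Because $m,n>0$, $\tau$ has the sign of $x-y$, and the adjoint action of $e^{-i\alpha\hat{\sigma_3}}$ multiplies the $(1,2)$ and $(2,1)$ entries of the integrand by $e^{\mp i\frac{k^2-1}{2k}\tau}$, respectively. A direct computation gives the key identity $\mathrm{Im}\bigl(\frac{k^2-1}{2k}\bigr)=\frac{\mathrm{Im}(k)(|k|^2+1)}{2|k|^2}$, so the boundedness of these oscillatory factors on the integration range is controlled by the sign of $\tau\cdot\mathrm{Im}(k)$.

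For the analyticity claim, this identity shows that the $(2,1)$ factor in the Neumann iteration of \eqref{int} for $\Phi_-^{(1)}$ ($\tau>0$) is bounded exactly when $\mathrm{Im}(k)\geq 0$, while for $\Phi_+^{(1)}$ ($\tau<0$) it is bounded when $\mathrm{Im}(k)\leq 0$; the symmetric analysis of the $(1,2)$ slot yields the corresponding half-planes for the second columns. The hypothesis $p_0,q_0\in 1+H^{3,2}(\mathbb{R})$ forces the coefficients $\sqrt{mn}-1$ and $(mn_x-m_xn)/(mn)$ appearing in $\tilde{U}$ to lie in $L^1(\mathbb{R}_y)$, so the Neumann series converges uniformly on compact subsets of the stated half-planes and Morera's theorem gives analyticity.

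For the symmetries, I would read off from the explicit formula for $\tilde{U}(k)$ the identities $\sigma_1\overline{\tilde{U}(\bar{k})}\sigma_1=\tilde{U}(k)$, $\sigma_2\tilde{U}(-k)\sigma_2=\tilde{U}(k)$ and $\overline{\tilde{U}(\bar{k}^{-1})}=\tilde{U}(k)$, using that $(k^2-1)/k=k-k^{-1}$ is odd under each of $k\to-k$ and $k\to k^{-1}$ and conjugates under $k\to\bar{k}$; the phase $\tilde{p}(x,t,k)$ in \eqref{2.4} has the same parities, so $e^{i\tilde{p}\sigma_3}$ transforms covariantly under each involution. Since the normalization $\Phi\to I$ at the relevant spatial infinity is preserved by every involution, uniqueness of solutions to the Volterra equation then forces $\Phi$ itself to carry the three claimed symmetries.

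The asymptotic statements are addressed last. As $k\to\pm\infty$ on $\mathbb{R}$, the oscillatory factor $e^{\pm i\frac{k^2-1}{2k}\tau}$ behaves like $e^{\pm ik\tau/2}$; one integration by parts inside the Neumann series produces a $1/k$ prefactor, so the off-diagonal contributions of $\tilde{U}\Phi$ decay, while the diagonal part of $\tilde{U}$ carries the explicit small factor $k/(k^2-1)\to 0$, giving $\Phi_\pm\to I$ columnwise as $k\to\pm\infty$. For $k\to\pm 1$, $\tilde{U}$ has a simple pole whose residue is rank one; substituting the ansatz $\Phi(x,t,k)=\frac{\pm i\alpha_\pm(x,t)}{k\mp 1}C_\pm+\mathcal{O}(1)$, noting that $(k^2-1)/(4k)$ vanishes at $k=\pm 1$ so the exponential in \eqref{int} collapses to the identity there, and matching leading orders forces $C_\pm$ to be the explicit matrices appearing in \eqref{asym2}; the reality of $\alpha_\pm$ is then a consequence of the $\sigma_1$ symmetry from the previous paragraph. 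I expect this $k\to\pm 1$ analysis to be the main obstacle, because $\tilde{U}$ is genuinely singular there and the interchange of the pole expansion with the Volterra iteration requires care; the cleanest way to proceed is to remove the local singularity by a gauge transformation near $k=\pm 1$, reducing locally to a regular Volterra problem.
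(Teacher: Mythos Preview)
The paper does not give an explicit proof of this proposition: it simply states that ``by the equations \eqref{int}, $\Phi_{\pm}$ have the following properties'' and lists items 1--3 without further argument. Your proposal fills in precisely the standard details one would supply here---analyticity from the Neumann iteration of \eqref{int} together with the sign identity $\mathrm{Im}\!\left(\tfrac{k^2-1}{2k}\right)=\tfrac{\mathrm{Im}(k)(|k|^2+1)}{2|k|^2}$, the three symmetries from the corresponding invariances of $\tilde U(k)$ and of the phase $\tilde p(x,t,k)$ combined with uniqueness for the Volterra problem, and the asymptotics at $k\to\infty$ and $k\to\pm1$ from the explicit decay and simple-pole structure of $\tilde U$---so your approach is essentially what the paper leaves implicit. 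Your identification of the $k\to\pm1$ step as the delicate one, and the suggestion to handle it via a local gauge removing the singularity, is the right instinct; this is exactly the kind of detail the paper suppresses.
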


\subsubsection{Scattering data}
\quad As two solutions of the Lax pair \eqref{lax2-1}, $\hat{\Phi}_{\pm}(k)$ satisfy a linear relation
\begin{equation}
\hat{\Phi}_+(k)=\hat{\Phi}_-(k)S(k),
\end{equation}
together with \eqref{trs1}, we get
\begin{equation}\label{liner}
\Phi_+(k)=\Phi_-(k)e^{-\tilde{p}(x,t,k)\hat{\sigma}_3}S(k).
\end{equation}

By the symmetries of $\Phi_\pm(k)$, we find
\begin{equation}
S(k)=\sigma_1\overline{S(\bar{k})}\sigma_1=\sigma_2S(-k)\sigma_2, \nonumber
\end{equation}
which implies   $S(k)$  in  the form
\begin{equation}
S(k)=\left(\begin{array}{cc}
\overline{a(\bar{k})}&b(k)\\ \overline{b(\bar{k})}&a(k)\end{array}\right)=
\left(\begin{array}{cc}
a(-k)&b(k)\\ -b(-k)&a(k)\end{array}\right).  \nonumber
\end{equation}
From \eqref{liner}, $a(k), b(k)$ can be expressed as
\begin{equation}\label{abphi}
a(k)=\det \left(\Phi_-^{(1)}\ \Phi_+^{(2)}\right),\quad b(k)=e^{2i\tilde{p}}\det \left(\Phi_-^{(2)}\ \Phi_+^{(2)}\right),
\end{equation}
so $a(k)$ is analytic in $\mathbb{C}^+$ by  Proposition \ref{prop1}.

Besides,   $\Phi_{\pm}(k)$ and  $a(k)$ admit the asymptotics
\begin{align}\label{asym1}
&\Phi_{\pm}(k)=I+\mathcal{O}(k^{-1}),\ \ a(k)=1+\mathcal{O}(k^{-1}),\quad k\to \infty, \nonumber\\
& a(k)=\mathcal{O}((k\pm1)),\quad k\to\pm 1.\nonumber
\end{align}

Suppose that $a(k)$ has $N_1$ simple zeros $\mu_1,\cdots,\mu_{N_1}$ on $\{k\in\mathbb{C}^+:0<\arg k<\frac{\pi}{2}, |k|>1\}$, and $N_2$ simple zeros $\nu_1,\cdots,\nu_{N_2}$ on the circle $\{k=e^{i\psi}:0<\psi<\frac{\pi}{2}\}$.
By the  symmetries of $a(k)$, we have
\begin{align}
&a(\pm\mu_n )=a(\pm \bar{\mu}_n)= a(\pm \mu_n^{-1})=
a(\pm\bar{\mu}_n^{-1})= 0,\quad n=1,\cdots,N_1,\nonumber\\
&a(\pm {\nu}_m)=a(\pm \bar{\nu}_m)=0,\quad
m=1,\cdots,N_2. \nonumber
\end{align}
 We denote the zeros of $a(k)$   as
\begin{align}
&\zeta_n=\mu_n, \zeta_{n+N_1}=-\mu_n^{-1}, \zeta_{n+2N_1}=-\bar{\mu}_n, \zeta_{n+3N_1}=\bar{\mu}_n^{-1}, \ \ n=1,\cdots, N_1;\nonumber\\
& \zeta_{m+4N_1}=\nu_m, \zeta_{m+4N_1+N_2}=-\bar{\nu}_m, \ \  m=1,\cdots, N_2.\nonumber
\end{align}
Then   discrete spectrum is $\mathcal{Z}:=\{\zeta_n, \bar{\zeta}_n\}_1^{4N_1+2N_2}$, as shown in Figure \ref{zero}.

 Denote $\mathcal{N}\triangleq\left\lbrace 1,...,4N_1+2N_2\right\rbrace $ as the subscript set   of  all  zeros, and  we fix  a small positive constant $\delta_0$ to give the
  partitions $\Delta,\nabla$ and $\Lambda$  of $\mathcal{N}$   as follows
\begin{equation} \label{devide}
\Delta=\left\lbrace n \in  \mathcal{N}: \text{Im}\theta_n> 0\right\rbrace, \nabla=\left\lbrace n \in  \mathcal{N}: \text{Im}\theta_n< 0\right\rbrace,
\end{equation}
where $\theta_n:=\theta(\zeta_n)$.
To distinguish different  type of zeros, we further give
\begin{align}
	&\Delta_1=\left\lbrace j \in \left\lbrace 1,...,N_1\right\rbrace: \text{Im}\theta(\mu_j)> 0\right\rbrace,
     \nabla_1=\left\lbrace j \in \left\lbrace 1,...,N_1\right\rbrace: \text{Im}\theta(\mu_j)< 0\right\rbrace,
	\nonumber\\
	&\Delta_2=\left\lbrace i \in \left\lbrace 1,...,N_2\right\rbrace: \text{Im}\theta(\nu_i)> 0\right\rbrace,
      \nabla_2=\left\lbrace i \in \left\lbrace 1,...,N_2\right\rbrace: \text{Im}\theta(\nu_i)< 0\right\rbrace.
	\nonumber
\end{align}

\begin{figure}
  \centering
\begin{tikzpicture}[scale=1.1]																	\draw[-latex](-3.5,0)--(3.5,0)node[right]{\textcolor{black} {Re$k$}};

\draw[dotted](0,0)circle(1.5);

\coordinate (a) at (0.8,1.27);
\fill[red] (a) circle (1pt);
\node at (0.9,1.5) {$\nu_m$};
\node at (2.1,1.8) {$\mu_n$};
\coordinate (a) at (-0.8,1.27);
\fill[red] (a) circle (1pt);
\node at (-0.9,1.5) {$-\bar{\nu}_m$};
\coordinate (a) at (0.8,-1.27);
\fill[red] (a) circle (1pt);
\node at (0.9,-1.5) {$\bar{\nu}_m$};
\coordinate (a) at (-0.8,-1.27);
\fill[red] (a) circle (1pt);
\node at (-0.9,-1.5) {$-\nu_m$};

\coordinate (a) at (-1.37,-0.6);
\fill[red] (a) circle (1pt);
\coordinate (a) at (1.37,-0.6);
\fill[red] (a) circle (1pt);
\coordinate (a) at (-1.37,0.6);
\fill[red] (a) circle (1pt);
\coordinate (a) at (1.37,0.6);
\fill[red] (a) circle (1pt);

\coordinate (a) at (0.53,0.53);
\fill[blue] (a) circle (1pt);
\node at (0.83,0.55) {$\bar{\mu}^{-1}_n$};
\coordinate (a) at (-0.53,0.53);
\fill[blue] (a) circle (1pt);
\node at (-0.75,0.55) {$-\mu^{-1}_n$};
\coordinate (a) at (0.53,-0.53);
\fill[blue] (a) circle (1pt);
\node at (0.83,-0.55) {$\mu^{-1}_n$};
\coordinate (a) at (-0.53,-0.53);
\fill[blue] (a) circle (1pt);
\node at (-0.75,-0.53) {$-\bar{\mu}^{-1}_n$};

\coordinate (a) at (1.78,1.78);
\fill[blue] (a) circle (1pt);
\coordinate (a) at (-1.78,1.78);
\fill[blue] (a) circle (1pt);
\node at (-2.2,1.8) {$-\bar{\mu}_n$};
\coordinate (a) at (1.78,-1.78);
\fill[blue] (a) circle (1pt);
\node at (2.1,-1.8) {$\bar{\mu}_n$};
\coordinate (a) at (-1.78,-1.78);
\fill[blue] (a) circle (1pt);
\node at (-2.2,-1.8) {$-\mu_n$};
										\draw[-latex](0,-3)--(0,3)node[above]{ \textcolor{black}{Im$k$}};
\coordinate (C) at (-0.2,2.2);
										\coordinate (D) at (2.2,0.2);
																							
\end{tikzpicture}
  \caption{\footnotesize There are two types of discrete spectrum: ($\textcolor{red}{\bullet} $)  are  distributed on unitary circle $|k|=1$  and ($\textcolor{blue}{\bullet} $)
   are  distributed  on complex  plane  $\mathbb{C}   $  off the  unitary circle $|k|=1$. }\label{zero}
\end{figure}
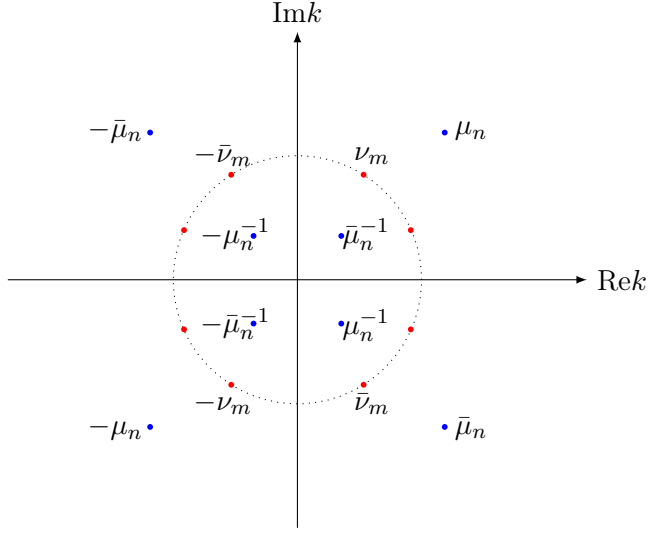

Define the reflection coefficient as
\begin{equation}\label{rk}
r(k)=\frac{\overline{b(k)}}{a(k)},
\end{equation}
which  admits the following symmetries
\begin{equation}
r(k)=-\overline{r(-\bar{k})}=\overline{r(\bar{k}^{-1})}=-r(-k^{-1})
\end{equation}
 with $r(0) = 0$,  since  $k\to 0$, $a(k)\to1$ and $b(k)\to 0$.

\subsubsection{Jost solutions at $\lambda=0$}

\quad We rewrite the Lax pair \eqref{lax2-1} as
\begin{equation}
\left\{\begin{array}{lr}
\hat{\Phi}_x+\left(\frac{i(k^2-1)}{4k}-\frac{ik}{k^2-1}\frac{mn_x-m_xn}{2mn}\right)\sigma_3\hat{\Phi}=\tilde{U}^0\hat{\Phi},\\
\hat{\Phi}_t-\frac{2i k(k^2-1)}{(k^2+1)^2}\sigma_3\hat{\Phi}=\tilde{V}^0\hat{\Phi},
\end{array}\right.
\end{equation}
where
\footnotesize
\begin{eqnarray}
&&\tilde{U}^0=-\frac{(k^2+1)^2}{4 k(k^2-1)}(\sqrt{mn}-1)\sigma_3-\frac{i(k^2+1)}{2(k^2-1)}\left( \sqrt{mn}-1 +\frac{mn_x-m_xn}{2mn}\right)
\left(\begin{array}{cc}0&1\\-1&0\end{array}\right),\nonumber\\
&&\tilde{V}^0=\frac{ik}{k^2-1}\left( w(\sqrt{mn}-1 ) -\frac{mn_t-m_tn}{2mn}\right)\sigma_3+\frac{i k}{k^2-1}\left(
\sqrt{\frac{n}{m}}p+\sqrt{\frac{m}{n}}q-2 \right)\sigma_3
+\frac{i(k^2-1)w}{4k}\sqrt{ mn }\sigma_3\nonumber\\
&&\qquad\quad\  +\frac{i(k^2+1)}{2(k^2-1)}\left( w(\sqrt{mn}-1) -\frac{mn_t-m_tn}{2mn}\right)\left(\begin{array}{cc}
0&1\\-1&0\end{array}\right)-\frac{4i k^2}{k^4-1}\left(\begin{array}{cc}0&1\\-1&0\end{array}\right)\nonumber\\
&&\qquad\quad\ +\left(\begin{array}{cc}0&-\frac{ k(k-i)p}{(k^2-1)(k+i)}\sqrt{\frac{n}{m}}+\frac{ k(k+i)q}{(k^2-1)(k-i)}\sqrt{\frac{m}{n}}\\
\frac{ k(k-i)q}{(k^2-1)(k+i)}\sqrt{\frac{m}{n}}-\frac{ k(k+i)p}{(k^2-1)(k-i)}\sqrt{\frac{n}{m}}&0\end{array}\right).\nonumber
\end{eqnarray}
\small
Take the transformation
\begin{equation}
\Phi^0(k)\equiv\Phi^0(x,t,k)=\hat{\Phi}(x,t,k)e^{\left(\frac{ik}{2(k^2-1)}\log\frac{n}{m}+\frac{i(k^2-1)}{4k}x-\frac{2ik(k^2-1)}{(k^2+1)^2}x\right)\sigma_3},
\end{equation}
then as $x\to\infty$, $\Phi^0(k)\to I$, and $\Phi^0(k)$ admits the following Lax pair
\begin{equation}
\left\{\begin{array}{lr}
\Phi^0_x+\left(\frac{i(k^2-1)}{4k}-\frac{ik}{k^2-1}\frac{mn_x-m_xn}{2mn}\right)[\sigma_3,\Phi^0]=\tilde{U}^0\Phi^0,\\
\Phi^0_t-\frac{2i k(k^2-1)}{(k^2+1)^2}[\sigma_3,\Phi^0]=\tilde{V}^0\Phi^0.
\end{array}\right.
\end{equation}
The matrix function $\Phi^0(k)$ has the following asymptotic behavior as $k\to i$:
\begin{equation}\label{asymi}
\Phi^0(x,t,k)=I+\left(\begin{array}{cc}
f_1(t)&\frac{1}{2 }\sqrt{\frac{m}{n}}q\\
\frac{1}{2 }\sqrt{\frac{n}{m}}p&f_2(t)\end{array}\right)(k-i)
+\mathcal{O}((k-i)^2),
\end{equation}
where $f_1(t), f_2(t)$ are functions only related to $t$.

Since $\Phi(k)e^{-i\tilde{p}(k)\sigma_3}$ and $\Phi^{0}(k)e^{\left(-\frac{ik}{2(k^2-1)}\log\frac{n}{m}-\frac{i(k^2-1)}{4k}x+\frac{2ik(k^2-1)}{(k^2+1)^2}x\right)\sigma_3}$ are two solutions of the same Lax pair \eqref{lax2-1}, they admit the  linear relations as follows:
\begin{equation}\label{line2}
\Phi(k)=\Phi^0(k)e^{\left(\frac{i(k^2-1)}{4 k}\int^x_{\pm\infty}(\sqrt{mn}-1)\mathrm{d}s-
\frac{ik}{2(k^2-1)}\log\frac{n}{m}\right)\sigma_3}.
\end{equation}
Therefore, together \eqref{asymi} with \eqref{line2}, the asymptotic behavior of $\Phi(k)$ at $k=i$ can be deduced.
\subsection{Scattering map }
\quad\ \  In this part, we discuss the relation between the initial values $p_0(x), q_0(x)$ and the reflection coefficient $r(k)$. We first show the main result by the following proposition:

\begin{Proposition}\label{pqr}
For the initial data $p_0, q_0\in 1+H^{3,2}(\mathbb{R})$ satisfying
$$(p_0+p_{0x})(q_0-q_{0x})=m(x,0)n(x,0)>\epsilon_0,$$  
where $\epsilon_0>0$ is a   constant,  the map $\{p_0, q_0\}\to r(k)$ is Lipschitz continuous from $1+H^{3,2}(\mathbb{R})$ to $H^{1,1}(\mathbb{R})$.
\end{Proposition}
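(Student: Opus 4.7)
The plan is the standard three-step reduction used for comparable inverse scattering problems (NLS, mKdV, mCH): (a) show that the Jost solutions $\Phi_{\pm}(\cdot,0,k)$ depend Lipschitz-continuously in $L^{\infty}_x$ on the initial data via the Volterra integral equation \eqref{int}; (b) translate this into Lipschitz dependence of $a(k),b(k)$ on $(p_0,q_0)$ using the determinantal identities \eqref{abphi}; (c) control the denominator $a(k)$, in particular its behavior near $k=\pm 1$, to conclude that $r=\overline{b}/a$ lies in $H^{1,1}(\mathbb{R})$ with Lipschitz dependence on the data. The positivity assumption $m(x,0)n(x,0)>\epsilon_0$ is what makes $\sqrt{mn}-1$, $1/(mn)$, and $(mn_x-m_xn)/(mn)$ well-defined smooth functionals of $(p_0,q_0)\in 1+H^{3,2}(\mathbb{R})$ into $L^1_x\cap L^2_x$, so that steps (a)--(b) can be carried out with straightforward Neumann-series estimates.

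\textbf{Steps (a) and (b).} For step (a) I would view the potential $\tilde U(\cdot,0,k)$ appearing in \eqref{int} as a smooth nonlinear functional of $(p_0,q_0)$ (using $mn>\epsilon_0$ to avoid singularities of $\sqrt{\cdot}$ and $1/(\cdot)$) and run a contraction / Neumann-series argument on the Volterra operator, obtaining
\begin{equation*}
\|\Phi_{\pm}(\cdot,0,k)-I\|_{L^{\infty}_x}\ \le\ C(k)\,\|\tilde U(\cdot,0,k)\|_{L^1_x}\,\exp\!\bigl(C\|\tilde U(\cdot,0,k)\|_{L^1_x}\bigr),
\end{equation*}
together with the analogous Lipschitz estimate in $(p_0,q_0)$ coming from linearity of the Volterra operator in its source. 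The bound is uniform for $k$ in compact subsets of $\mathbb{R}\setminus\{-1,0,1\}$. Step (b) then follows directly from \eqref{abphi}: $a(k)$ and $b(k)$ are continuous bilinear expressions in the columns of $\Phi_{\pm}$, and they inherit both continuity in $k$ and Lipschitz dependence on $(p_0,q_0)$.

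\textbf{Step (c) and the main obstacle.} To put $r$ in $H^{1,1}(\mathbb{R})$, I would iterate the Volterra expansion of $\Phi_{\pm}$ and write $b(k)$ schematically as
\begin{equation*}
b(k)=e^{2i\tilde p(0,0,k)}\int_{\mathbb{R}}e^{-i\theta(x,0,k)}\,F[p_0,q_0](x,k)\,dx+(\text{higher Born-series terms}),
\end{equation*}
with phase $\theta$ read off from \eqref{2.4}. Two integrations by parts in $x$, absorbing two of the three derivatives available in $H^{3,2}$, produce the weight $\langle k\rangle r(k)\in L^2$; two $x$-weights on $(p_0,q_0)$ similarly yield $r'(k)\in L^2$ after differentiating in $k$ under the integral. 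This accounts for the asymmetric loss of indices $H^{3,2}\to H^{1,1}$, and the same manipulations performed on the difference $b[p_0^1,q_0^1]-b[p_0^2,q_0^2]$ give the Lipschitz bound.

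\textbf{The genuine difficulty} is at $k=\pm 1$, where $\tilde U$ carries an overall factor $1/(k^2-1)$ and the Jost functions themselves blow up as in \eqref{asym2}, so the Neumann estimate above degenerates. Here I would invoke the alternative Jost family $\Phi^0$ constructed around $\lambda=0$ and the explicit linear relation \eqref{line2} between $\Phi$ and $\Phi^0$. Since $\Phi^0$ satisfies a Lax pair which is regular at $k=\pm 1$ and admits the smooth expansion \eqref{asymi} at $k=i$, the identity \eqref{line2} shows that $a(k)$ and $b(k)$ each vanish to first order like $(k\mp 1)$ at $k=\pm 1$, so their quotient $r=\overline b/a$ extends continuously across those points with the correct $H^{1,1}$ behavior. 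A secondary step, handled under the genericity hypothesis through the $S$-matrix symmetries recorded after Proposition \ref{prop1}, is to rule out zeros of $a(k)$ on $\mathbb{R}\setminus\{\pm 1\}$, so that the quotient is well-defined and smooth there.
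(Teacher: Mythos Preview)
Your overall strategy---Volterra/Neumann estimates for the Jost columns, then the determinantal formulas \eqref{abphi}, then control of the quotient $r=\bar b/a$---is the same as the paper's, and steps (a) and (b) are fine as stated. The paper organizes step (a) slightly differently: it isolates the single scalar potential
\[
\tilde m(x)=\sqrt{m(x,0)n(x,0)}-1+\frac{m(x,0)n_x(x,0)-m_x(x,0)n(x,0)}{2\,m(x,0)n(x,0)}
\]
appearing in the kernel \eqref{K}, and proves as a separate lemma that $(p_0,q_0)\mapsto\tilde m$ is Lipschitz from $1+H^{3,2}(\mathbb R)$ into $H^{1,2}(\mathbb R)$. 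This is where the positivity hypothesis $mn>\epsilon_0$ is actually used, and once it is in hand the Volterra analysis proceeds exactly as in \cite{mch8,Xu,Ju} with $\tilde m$ playing the role of the potential. Your version hides this reduction inside the phrase ``smooth nonlinear functional of $(p_0,q_0)$''; making it explicit is what allows you to quote the existing $H^{1,2}\to H^{1,1}$ machinery from those references rather than re-deriving the integration-by-parts estimates by hand.

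There is, however, a genuine error in your treatment of $k=\pm 1$. You propose to bypass the $1/(k^2-1)$ singularity in $\tilde U$ by switching to the Jost family $\Phi^0$ and invoking \eqref{line2}, asserting that the Lax pair for $\Phi^0$ is regular at $k=\pm1$. It is not: $\tilde U^0$ still carries factors of $(k^2-1)^{-1}$, as you can read off directly from its definition. The family $\Phi^0$ is constructed to control the singularity at $\lambda=0$, i.e.\ at $k=i$ (this is exactly what the expansion \eqref{asymi} records), and the relation \eqref{line2} is used in the paper only to obtain the asymptotics of $M(k)$ near $k=i$ needed for the reconstruction formula---not to regularize anything at $k=\pm1$. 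The correct mechanism at $k=\pm1$ is the one encoded in \eqref{asym2}: the leading singular parts of the columns of $\Phi_\pm$ are rank-one and proportional, so that in the determinants defining $a(k)$ and $b(k)$ the $(k\mp1)^{-1}$ poles cancel against each other, leaving $a(k)=\mathcal O(k\mp1)$ and a quotient $r$ that is bounded (indeed with $|r(\pm1)|=1$). The $H^{1,1}$ estimate near these points then follows from the detailed local analysis carried out in \cite{mch8,Xu,Ju}, which the paper simply cites; your $\Phi^0$ argument cannot substitute for it.
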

We show the primary difficulty in the proof:
Based on the fact that $S(k)$ is independent with $x, t$, by taking $x=t=0$ in equation \eqref{abphi}, we have
\begin{eqnarray}
&&a(k)-1=n_{11}^-(0,k)\overline{n_{11}^+(0,k)}-n_{21}^-(0,k)\overline{n_{21}^+(0,k)}+n_{11}^-(0,k)+\overline{n_{11}^+(0,k)},\label{a-1}\\
&&e^{-2i\tilde{p}(0,0,k)}\overline{b(k)}=n_{11}^-(0,k)n^+_{21}(0,k)-n_{21}^-(0,k)n^+_{11}(0,k)+n^+_{21}(0,k)-n_{21}^-(0,k),\label{eb}
\end{eqnarray}
with
$$\textbf{n}^{\pm}(x,k):=(n^{\pm}_{11}(x,k),n^{\pm}_{21}(x,k))^T=(\phi_{11}^{\pm}(x,k)-1,\phi_{21}^{\pm}(x,k))^T,$$
where $\phi^{\pm}_{jl}(x,k)$ are the components of $\Phi_{\pm}(x,0,k)$. For $k \in \mathbb{R}$, $\tilde{p}(0,0,k)$ is real, thus $\| \overline{b(k)}\|_{L^2(\mathbb{R})}=\| e^{-2i\tilde{p}(0,0,k)}\overline{b(k)}\|_{L^2(\mathbb{R})}$.

Take $\textbf{n}^+(x,k)$ for example and to convenient, we replace it by $\textbf{n}(x,k)$. By \eqref{int}, we have
\begin{equation}\label{n}
\textbf{n}(x,k)=\textbf{n}_0(x,k)+T\textbf{n}(x,k),
\end{equation}
where $\textbf{n}_0(x,k)=T\textbf{e}_1$ and $T$ is an integral operator defined by
\begin{equation}
T\textbf{f}(x,k)=\int_x^{+\infty}K(x,y,k)\textbf{f}(y,k)\mathrm{d}y,
\end{equation}
with the kernel
\begin{equation}\label{K}
K(x,y,k)=\frac{ik\tilde{m}(y)}{k^2-1}\left(\begin{array}{cc}1&0\\e^{-\frac{i}{2}(k-\frac{1}{k})(h(y)-h(x))}&0\end{array}\right)
+\frac{i(k^2+1)\tilde{m}(y)}{2(k^2-1)}\left(\begin{array}{cc}0&1\\-1&e^{-\frac{i}{2}(k-\frac{1}{k})(h(y)-h(x))}\end{array}\right),
\end{equation}
here the new function $\tilde{m}(x)$ is introduced as
\begin{equation}
\tilde{m}(x):=\sqrt{m(x,0)n(x,0)}-1+\frac{m(x,0)n_{x}(x,0)-m_{x}(x,0)n(x,0)}{2m(x,0)n(x,0)},
\end{equation}
and $h(x):=x-\int_x^{\infty}(\sqrt{m(s,0)n(s,0)}-1)\mathrm{d}s$.

To finish the proof, we deduce the following lemma
\begin{lemma}
The map
$
\{p_0(x), q_0(x)\}\to \tilde{m}(x)
$
 is Lipschitz continuous from $1+H^{3,2}(\mathbb{R})$ to $H^{1,2}(\mathbb{R})$.
\end{lemma}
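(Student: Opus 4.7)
The plan is to begin with the algebraic identity
$$\frac{mn_x-m_xn}{2mn}=\tfrac{1}{2}\partial_x\log(n/m),$$
which rewrites $\tilde m = (\sqrt{mn}-1) + \tfrac{1}{2}\partial_x(\log n - \log m)$. This decomposes the claim into two pieces: show that $(p_0,q_0)\mapsto\sqrt{mn}-1$ is Lipschitz from $1+H^{3,2}$ into $H^{2,2}$, and that $(p_0,q_0)\mapsto\log m,\log n$ are Lipschitz from $1+H^{3,2}$ into $H^{2,2}$; a single $x$-derivative then lands the second term in $H^{1,2}$, while the embedding $H^{2,2}\hookrightarrow H^{1,2}$ handles the first.

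Next I would establish the basic regularity and uniform positivity. Since $p_0-1,q_0-1\in H^{3,2}(\mathbb R)$ and $m=p_0+p_{0x}$, $n=q_0-q_{0x}$, the functions $m-1$ and $n-1$ belong to $H^{2,2}(\mathbb R)$, with norms controlled linearly by $\|p_0-1\|_{H^{3,2}}$ and $\|q_0-1\|_{H^{3,2}}$. The Sobolev embedding $H^{2,2}(\mathbb R)\hookrightarrow C^1_b(\mathbb R)$ implies $m,n$ are bounded continuous functions tending to $1$ at $\pm\infty$. Combined with the hypothesis $mn\geq \epsilon_0>0$, the upper $L^\infty$-bound on $n$ forces the uniform lower bound $m\geq \epsilon_0/\|n\|_{L^\infty}$, and symmetrically for $n$, so that $m,n$ stay in a compact positive interval $[c_0,C_0]$ depending only on $\epsilon_0$ and the $H^{3,2}$-norms of the data. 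Using the Banach-algebra property of $H^{2,2}(\mathbb R)$ (valid because $H^1(\mathbb R)\hookrightarrow L^\infty$), the product $mn-1=(m-1)+(n-1)+(m-1)(n-1)$ lies in $H^{2,2}$, and the bilinear map into it is Lipschitz on bounded sets.

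I would then invoke a composition-with-smooth-functions estimate in the weighted space $H^{2,2}$. Setting $F_1(y):=\sqrt{1+y}-1$ and $F_2(y):=\log(1+y)$, both smooth on a neighborhood of the compact ranges of the respective arguments and vanishing at $y=0$, a weighted Moser-type lemma yields: whenever $u\in H^{2,2}(\mathbb R)$ takes values in a fixed compact $K\Subset\mathrm{dom}(F_j)$, the composition $F_j(u)$ lies in $H^{2,2}$, and the map $u\mapsto F_j(u)$ is Lipschitz on $H^{2,2}$-bounded sets subject to the range constraint. Applied to $F_1$ with $u=mn-1$ and to $F_2$ with $u=m-1,n-1$, and combined with the boundedness of $\partial_x:H^{2,2}\to H^{1,2}$, this produces the desired Lipschitz continuity of $(p_0,q_0)\mapsto \tilde m$.

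The principal obstacle is establishing the weighted-Sobolev composition estimate with a \emph{uniform} Lipschitz constant, rather than mere continuity. The Lipschitz estimate is obtained by writing
$$F(u)-F(v)=(u-v)\int_0^1 F'\bigl(v+\tau(u-v)\bigr)\,d\tau$$
and differentiating via the chain rule / Faà di Bruno formula; the $\langle x\rangle^2$ weight is then distributed across each resulting product by inequalities of the type $\|\langle x\rangle^2 fg\|_{L^2}\leq \|f\|_{L^\infty}\|\langle x\rangle^2 g\|_{L^2}+\|\langle x\rangle^2 f\|_{L^2}\|g\|_{L^\infty}$, together with the 1D embedding $H^1\hookrightarrow L^\infty$. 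The resulting constant depends only on the $H^{3,2}$-balls of $p_0-1,q_0-1$ and on $\epsilon_0$, which is exactly what is needed for Lipschitz continuity of the map on bounded sets.
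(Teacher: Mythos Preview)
Your proposal is correct and in fact more complete than the paper's own argument, but the route is genuinely different. The paper proceeds by direct weighted $L^2$ computation: it writes $\sqrt{mn}-1=(mn-1)/(\sqrt{mn}+1)$, splits $mn-1=(m-1)n+(n-1)$, and bounds $\|\langle x\rangle^2\tilde m\|_{L^2}$ and $\|\langle x\rangle^2\tilde m_x\|_{L^2}$ term by term, using only that $m-1,n-1\in H^{2,2}$ and $mn\geq\epsilon_0$; strictly speaking the paper only writes the bound $\|\tilde m\|_{H^{1,2}}\lesssim \|m-1\|_{H^{2,2}}+\|n-1\|_{H^{2,2}}$ and leaves the Lipschitz estimate for differences implicit. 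Your approach instead rewrites the second term as $\tfrac12\partial_x\log(n/m)$, reduces everything to composition with the smooth functions $y\mapsto\sqrt{1+y}-1$ and $y\mapsto\log(1+y)$, and appeals to a weighted Moser-type composition lemma together with the algebra property of $H^{2,2}(\mathbb R)$. What your approach buys is a clean, uniform Lipschitz constant on bounded sets via the integral formula $F(u)-F(v)=(u-v)\int_0^1F'(v+\tau(u-v))\,d\tau$, which the paper does not spell out; what the paper's approach buys is brevity and avoidance of any abstract composition machinery. One minor remark: your step ``$mn\geq\epsilon_0$ plus $\|n\|_{L^\infty}$ gives $m\geq\epsilon_0/\|n\|_{L^\infty}$'' tacitly uses $m>0$, but this is automatic here since $m,n$ are continuous, tend to $1$ at infinity, and their product never vanishes.
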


\begin{proof}
For $p_0, q_0\in1+H^{3,2}(\mathbb{R})$, by \eqref{2mch-q}, we have
$m(x,0)-1, n(x,0)-1 \in H^{2,2}(\mathbb{R})$. Therefore
\begin{align}
\int_{\mathbb{R}}\mid x^2\tilde{m}(x)\mid^2\mathrm{d}x=&\int_{\mathbb{R}}\mid\frac{x^2(m(x,0)n(x,0)-1)}{\sqrt{m(x,0)n(x,0)}+1}+\frac{x^2m(x,0)n_{x}(x,0)-x^2m_{x}(x,0)n(x,0)}{2m(x,0)n(x,0)}\mid^2\mathrm{d}x\nonumber\\
\lesssim&\int_{\mathbb{R}}(x^2(m(x,0)-1)n(x,0)+x^2(n(x,0)-1))^2\mathrm{d}x+\int_{\mathbb{R}}(m(x,0)x^2n_{x}(x,0))^2\mathrm{d}x\nonumber\\
&+\int_{\mathbb{R}}(x^2m_{x}(x,0)n(x,0))^2\mathrm{d}x
\lesssim\parallel m-1\parallel_{H^{1,2}(\mathbb{R})}+\parallel n-1\parallel_{H^{1,2}(\mathbb{R})}.\nonumber
\end{align}
By similar calculation, we deduce that
\begin{equation}
\int_{\mathbb{R}}\mid x^2\tilde{m}_x(x)\mid^2\mathrm{d}x\lesssim\parallel m-1\parallel_{H^{2,2}(\mathbb{R})}+\parallel n-1\parallel_{H^{2,2}(\mathbb{R})}.
\end{equation}\end{proof}

Based on above results, $n_{11}(k), n_{22}(k)$ can be proved to belong to $H^1(\mathbb{R})$, and $kb(k)\in L^2(\mathbb{R})$ by the analytical method shown in \cite{mch8,Xu,Ju}, which together with \eqref{rk}, \eqref{a-1} and \eqref{eb} shows that $r(k)\in H^{1,1}(\mathbb{R})$.

\subsection{Set-up of a  RH problem}

\quad  We introduce a new  scale
\begin{equation}
y: =x-\int_x^{+\infty}(\sqrt{mn}-1)\mathrm{d}s, \label{2.39}
\end{equation}
and write   (\ref{2.4}) in the form
\begin{equation}
 \tilde{p}(x,t,k)=t\theta(k,\xi),
\end{equation}
where
\begin{equation}
\theta(k,\xi)=\frac{k^2-1}{4k}\xi-\frac{2k(k^2-1)}{(k^2+1)^2}, \ \ \xi =\frac{y}{t}. \nonumber
\end{equation}

Define a matrix function
\begin{equation}
M(k):=M(y,t,k)=\left\{\begin{array}{lr}
\left(\frac{\Phi_-^{(1)}}{a\left(k\right)}\quad \Phi_+^{(2)}\right),\quad \mathrm{Im}k>0,\\
\left(\Phi_+^{(1)}\quad \frac{\Phi_-^{(2)}}{\overline{a(\bar{k})}}\right),\quad \mathrm{Im}k<0,\end{array}
\right.
\end{equation}
which solves the following RH problem
\begin{RHP}\label{rhp1}
Find a $2\times 2$ matrix-valued function $M(k)$ satisfying \begin{itemize}
             \item Analyticity: $M(k)$ is meromorphic in $\mathbb{C}\setminus\mathbb{R};$
             \item Symmetry: $M(k)=\sigma_1 \overline{ M(\bar{k})}\sigma_1=\sigma_2 M(-k)\sigma_2=\overline{M(\bar{k}^{-1})};$
             \item Jump condition: $M(k)$ has continuous boundary values $M_{\pm}(k)$ on $\mathbb{R}$ and
\begin{equation}
M_+(k)=M_-(k)V(k),\quad k\in \mathbb{R},
\end{equation}
where
\begin{equation}
V(k)=e^{-it\theta(k)\hat{\sigma}_3}V_0(k),
\end{equation}
and
\begin{equation}
V_0(k)=\left(\begin{array}{cc}
1-|r(k)|^2&\bar{r}(k)\\-r(k)&1\end{array}\right);
\end{equation}
             \item Asymptotic behaviors:
\begin{eqnarray}
&&M(k)=I+\mathcal{O}(k^{-1}),\quad k\to \infty;\nonumber\\
&&M(k)=\frac{i}{k-1}\alpha_+\left(\begin{array}{cc}
0&1\\0&-1\end{array}\right)+\mathcal{O}(1),\quad k\to 1,\ \mathrm{Im}k>0;\nonumber\\
&&M(k)=\frac{i}{k-1}\alpha_+\left(\begin{array}{cc}
1&0\\-1&0\end{array}\right)+\mathcal{O}(1),\quad k\to 1,\ \mathrm{Im}k<0;\nonumber\\
&&M(k)=\frac{i}{k+1}\alpha_-\left(\begin{array}{cc}
0&0\\-1&-1\end{array}\right)+\mathcal{O}(1),\quad k\to 1,\ \mathrm{Im}k>0;\nonumber\\
&&M(k)=\frac{i}{k+1}\alpha_-\left(\begin{array}{cc}
1&1\\0&0\end{array}\right)+\mathcal{O}(1),\quad k\to 1,\ \mathrm{Im}k>0;\nonumber\\
&&M(k)=e^{\tau_+\sigma_3}+e^{\tau_+\sigma_3}\left(\begin{array}{cc}
f_1(t)&\frac{1}{2}\sqrt{\frac{m}{n}}q\\
\frac{1}{2}\sqrt{\frac{n}{m}}p&f_2(t)\end{array}\right)(k-i)
+\mathcal{O}((k-i)^2),\quad k\to i, \label{2.45}
\end{eqnarray}
where
\begin{equation}
\tau_+=\frac{1}{2}\int_{-\infty}^{+\infty}(\sqrt{mn}-1 )\mathrm{d}s-\frac{1}{4}\ln\frac{n}{m};
\end{equation}
\item
Residue condition: $M(k)$ has simple poles at each $\zeta_n\in\mathcal{Z}$ with
\begin{eqnarray}
&&\res_{k=\zeta_n}M(k)=\lim_{k\to\zeta_n}M\left(\begin{array}{cc}
0&0\\c_ne^{2it\theta}&0\end{array}\right),\\
&&\res_{k=\bar{\zeta}_n}M(k)=\lim_{k\to\bar{\zeta}_n}M\left(\begin{array}{cc}
0&\bar{c}_ne^{-2it\theta}\\0&0\end{array}\right),
\end{eqnarray}
where $c_n=-\frac{\overline{b(\bar{\zeta_n})}}{a'(\zeta_n)}, n=1,\cdots,4N_1+2N_2.$
           \end{itemize}
\end{RHP}
From the asymptotic expansion  (\ref{2.45}), we deduce the reconstruction formula for the solutions   of the 2-mCH equation \eqref{2mch-p}-\eqref{2mch-q} as follows
\begin{equation}\label{restr}
 \begin{array}{lr}
(\log p)_x=-\partial_x(\log G_3)G_1^{-1}-1,\\[5pt]
(\log q)_x=1-\partial_x(\log G_3)G_2^{-1},\\[5pt]
 \displaystyle{x(y,t)=y +  \lim_{k\to i}\log \frac{M_{11} }{M_{22} } +\log\frac{q-q_x}{p+p_x}},
\end{array}
\end{equation}
where
\begin{eqnarray}
&&G_1=\lim_{k\to i}\frac{M_{21} M_{22} }{M_{11} M_{12} }-1, \ \ \ G_2=\lim_{k\to i}\frac{M_{11} M_{12} }{M_{21} M_{22}  }-1, \ \
 G_3=\lim_{k\to i}\frac{M_{12} M_{21}  }{(k-i)^2}.\nonumber
\end{eqnarray}

\begin{figure}
	\centering \subfigure[No phase point on $\mathbb{R}$\qquad\quad]{\includegraphics[width=0.27\linewidth]{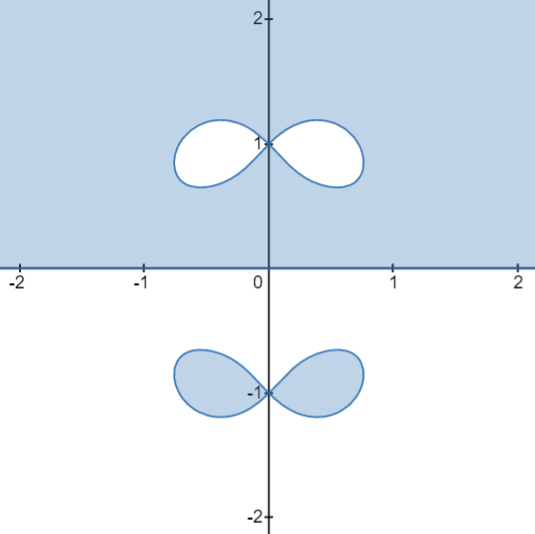}\hspace{1cm}
	\label{fig:desmos-graph}} \subfigure[Four phase points on $\mathbb{R}$\qquad\quad]{\includegraphics[width=0.27\linewidth]{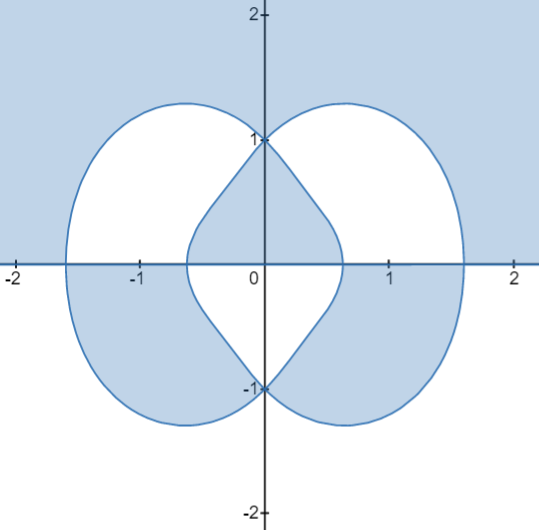}\hspace{1cm}
	\label{fig:desmos-graph-5}}	\subfigure[Eight  phase points on $\mathbb{R}$\qquad\quad]{\includegraphics[width=0.27\linewidth]{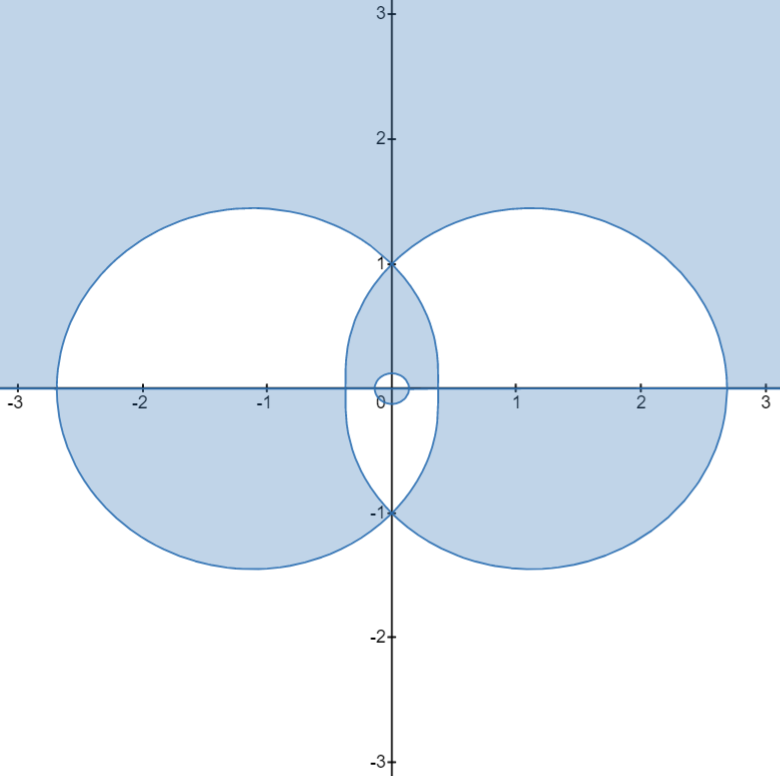}\hspace{1cm}
	\label{fig:desmos-graph-3}} \subfigure[No phase point on $\mathbb{R}$\qquad\quad]{\includegraphics[width=0.27\linewidth]{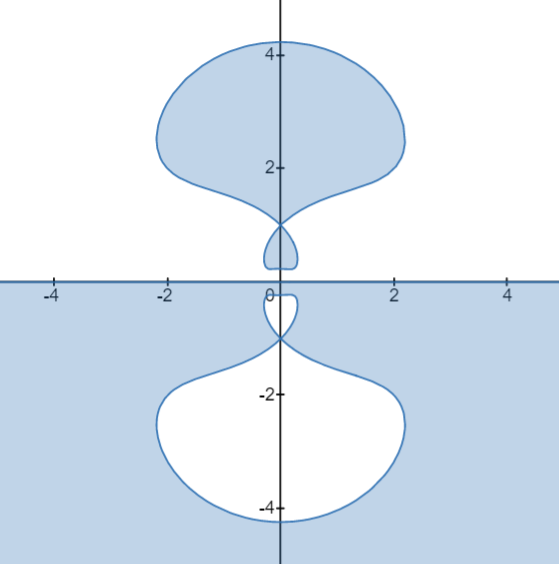}\hspace{1cm}
	\label{fig:desmos-graph-6}}
	\caption{\footnotesize The classification  of $\text{sign Im}\theta$.  In the blue regions, $\text{Im}\theta>0$,  which  implies that $|e^{2it\theta}|\to 0$ as $t\to\infty$.
While  in the white regions,   $\text{Im}\theta<0$, which implies  $|e^{-2it\theta}|\to 0$ as $t\to\infty$.   The blue curves  $\text{Im}\theta=0$ are critical lines between decay and growth regions.  }
	\label{figtheta}
\end{figure}

To remove the singularity of $M(k)$ at $k=\pm 1$, we introduce the following transformation
\begin{equation}\label{trans1}
M(k)=F(k)M^J(k),
\end{equation}
where
$$ F(k):=\left(I+\frac{\sigma_1}{k}\right)^{-1}\left(I+\frac{\sigma_1M^J(0)^{-1}}{k}\right). $$
Then matrix function $M^J(k)$ satisfies a new RH problem:
\begin{RHP}\label{rhp2}
Find a $2\times 2$ matrix-valued function $M^J(k)$ satisfying \begin{itemize}
             \item Analyticity: $M^J(k)$ is meromorphic in $\mathbb{C}\setminus\mathbb{R};$
             \item Symmetry: $M^J(k)=\sigma_1 \overline{ M^J(\bar{k})}\sigma_1=\sigma_2 M^J(-k)\sigma_2=\sigma_1M^J(0)^{-1}M^J(k^{-1})\sigma_1;$
             \item Jump condition: $M(k)$ has continuous boundary values $M^J_{\pm}(k)$ on $\mathbb{R}$ and
\begin{equation}
M^J_+(k)=M^J_-(k)V(k),\quad k\in \mathbb{R};
\end{equation}
\item
Residue condition: $M^J(k)$ has simple poles at each $\zeta_n\in\mathcal{Z}$ with
\begin{eqnarray}
&&\res_{k=\zeta_n}M^J(k)=\lim_{k\to\zeta_n}M^J(k)\left(\begin{array}{cc}
0&0\\c_n e^{2it\theta}&0\end{array}\right),\label{resd-j}\\
&&\res_{k=\overline{\zeta}_n}M^J(k)=\lim_{k\to\overline{\zeta}_n}M^J(k)\left(\begin{array}{cc}
0&\bar{c}_ne^{-2it\theta}\\0&0\end{array}\right),
\end{eqnarray}
where $c_n=-\frac{\overline{b(\bar{\zeta_n})}}{a'(\zeta_n)}, n=1,\cdots,4N_1+2N_2.$
           \end{itemize}
\end{RHP}

The jump matrix  $V(k)$  admits two different  factorizations
	\begin{align}
		V(k)&=\left(\begin{array}{cc}
			1 &  \bar{r} e^{-2it\theta} \\
			0 & 1
		\end{array}\right)\left(\begin{array}{cc}
			1 & 0\\
			-r e^{2it\theta} & 1
		\end{array}\right) \label{V11}\\
		&=\left(\begin{array}{cc}
			1 & 0\\
			-\frac{r}{1-|r|^2}e^{2it\theta}& 1
		\end{array}\right)(1-|r|^2)^{\sigma_3}\left(\begin{array}{cc}
			1 & \frac{\bar{r}}{1-|r|^2}e^{-2it\theta} \\
			 0 & 1
		\end{array}\right).\label{V12}
	\end{align}

To deal with  the  oscillatory terms $e^{\pm2it\theta}$   in the jump matrix and residue conditions of RHP \ref{rhp1},
we consider the real part of $2it\theta$:
\begin{equation}
\mathrm{Re}(2it\theta)=-2t\mathrm{Im} k\left[
\frac{\xi}{4}(1+|k|^{-2})+2\dfrac{|k|^6-2|k|^4-(3\text{Re}^2k-\text{Im}^2k)(1+|k|^2)-2|k|^2
+1}{\left((\text{Re}^2k-\text{Im}^2k+1)^2+4\text{Re}^2k\text{Im}^2k \right)^2 }\right].\nonumber
\end{equation}
The signature of $\mathrm{Im}k$ are shown in Figure \ref{figtheta}.
We divide $(y,t)-$half plane into four asymptotic regions as follows:

\begin{itemize}

\item   I.  $ { {\xi}}>2$: No   phase point on the real axis, which corresponds to  Figure \ref{figtheta} (a);

\item   II. $0< { {\xi}}<2$:   Four     phase points  on the real axis, which corresponds to     Figure  \ref{figtheta} (b);

\item   III.  $-1/4< { {\xi}}<0$:   Eight  phase points  on the real axis, which corresponds to   Figure \ref{figtheta} (c);

\item  IV. $  {\xi}<-1/4$:   No   phase point on the real axis, which corresponds to  Figure \ref{figtheta} (d).

\end{itemize}

\section{Long-time asymptotics in regions without phase point} \label{sec3}

\quad In this section, we study  the   asymptotics for the 2-mCH equation in
two regions  I.  $  \xi  >2$ and IV.\ $ \xi <-1/4 $ which contain no   phase point on the real axis.

\subsection{Normalization of the RH problem}


\quad Denote
\begin{equation}
I( {\xi})=\left\{\begin{array}{ll}\emptyset,\ &\text{for\ } \  \xi>2,\\
\mathbb{R},&\text{for\  } \  \xi <-1/4;\end{array}\right.
\end{equation}
and define a new function
	\begin{equation}\label{tk}
	T(k)=\prod_{n\in\nabla}\frac{ k-\zeta_n}{\zeta_n^{-1}k-1}\exp\left\{ i\frac{1}{2\pi}\int_{I(\xi)}\frac{\log(1-|r|^2)}{s-k}ds\right\},
	\end{equation}
then $T(k)$ admits the jump relation on $\mathbb{R}$
$$T_+(k)=T_-(k) (1-|r|^2),\  \ k\in I(\xi); \ \ T_+(k)=T_- (k),  \ \  k\in \mathbb{R}\setminus I(\xi).$$

Further  we introduce a  new  matrix-valued   function
\begin{equation}
M^{(1)}(k):= M^{(1)}(y,t,k)=M^J(k)T(k)^{\sigma_3},\label{trans2}
\end{equation}
which then satisfies jump  relation
\begin{equation*}
M^{(1)}_+(k)=M^{(1)}_-(k)V^{(1)}(k),\quad k\in\mathbb{R},
\end{equation*}
where
\begin{equation}\label{v1}	
V^{(1)}(k)=\left(\begin{array}{cc}
1 & \bar\rho(k)T_-(k)^{-2}e^{-2it\theta} \\0&1
\end{array}\right)\left(\begin{array}{cc}
1 & 0\\ -{\rho}(k) T_+(k)^2e^{2it\theta} & 1
\end{array}\right),\quad   k\in 	\mathbb{R}.
\end{equation}
with
\begin{equation}
\rho(k) =\left\{\begin{array}{lll}
r(k),\ &  \text{for\  } \  \xi>2 , \\
-\dfrac{r(k) }{1-|r(k)|^2},  & \ \text{for\ } \  \xi <-1/4.
\end{array}\right.
\end{equation}

And $M^{(1)}(k)$ has simple poles at $\zeta_n$ and $\bar{\zeta}_n,\ n=1,\cdots,4N_1+2N_2$ with the following residue conditions
\begin{align}
 &\res_{k=\zeta_n} M^{(1)}(k)=\lim _{k \rightarrow \zeta_n} M^{(1)}(k)\left(\begin{array}{cc}
0 & c_n^{-1}T^{\prime}\left(\zeta_n\right)^{-2} e^{-2 i t \theta_n} \\
0 & 0
\end{array}\right), &n \in \nabla, \label{res1}\\
 &\res_{k=\zeta_n} M^{(1)}(k)=\lim _{k \rightarrow \zeta_n} M^{(1)}(k)\left(\begin{array}{cc}
0 & 0 \\
c_n T^2\left(\zeta_n\right) e^{2 it\theta_n} & 0
\end{array}\right), &n \in \Delta, \\
&\res_{k=\bar{\zeta}_n} M^{(1)}(k)=\lim _{k \rightarrow \bar{\zeta}_n} M^{(1)}(k)\left(\begin{array}{cc}
0 & 0 \\
\bar{c}_n^{-1}\left(\frac{1}{T}\right)^{\prime}\left(\bar{\zeta}_n\right)^{-2} e^{2 i t \bar{\theta}_n} & 0
\end{array}\right), &n \in \nabla \text {, } \\
&\res_{k=\overline{\zeta}_n} M^{(1)}(k)=\lim_{k \rightarrow \bar{\zeta}_n} M^{(1)}(k)\left(\begin{array}{cc}
0 & \bar{c}_n T^{-2}\left(\bar{\zeta}_n\right) e^{-2 i t \bar{\theta}_n} \\
0 & 0
\end{array}\right), &n \in\Delta \text {. } \label{res4}
\end{align}

\subsection{A hybrid $\bar{\partial}$-RH problem}
\qquad In this part,  we make a continuous extension of the jump matrix  $V^{(1)}(k)$
to remove the jump  from $\mathbb{R}$ by using two factorization (\ref{V11})-(\ref{V12}).

Define  two   rays
\begin{align}
&\Sigma_1 =\left\lbrace k\in\mathbb{C}:\ k=\mathbb{R}_+ e^{ i\varphi}\right\rbrace  \cup \left\lbrace k\in\mathbb{C}:\ k=\mathbb{R}_+ e^{ i(\pi-\varphi)}\right\rbrace, \nonumber\\
&\Sigma_2 =\left\lbrace k\in\mathbb{C}:\ k=\mathbb{R}_+ e^{-i\varphi}\right\rbrace  \cup \left\lbrace k\in\mathbb{C}:\ k=\mathbb{R}_+ e^{ i(\pi+\varphi)}\right\rbrace,\nonumber
\end{align}
which together with $\mathbb{R}$ divide the complex plane  $\mathbb{C}$ into  three  regions $\Omega_1$, $\Omega_2$   and  $\Omega_3$, see Figure \ref{figR2}.
Moreover, we open the contour $\mathbb{R}$  with a sufficiently small angle $\varphi$   such that there is no any pole in $\Omega_1$  and  $\Omega_2$.
Denote $\Sigma=\Sigma_1\cup\Sigma_2 $.



\begin{figure}[h]
\begin{center}
\subfigure[\footnotesize   The opened contour $\Sigma$ for the asymptotic region   ${\rm I.}\ \xi>2$, which corresponds to    the   Figure  \ref{figtheta}-(a).       ]{
	\begin{tikzpicture}[scale=1.2]								
										\draw[teal,thick](0,0)--(3,0.5);
										\draw[teal,thick](0,0)--(-3,0.5);
										\draw[brown,thick](0,0)--(-3,-0.5);
										\draw[brown,thick](0,0)--(3,-0.5);
										\draw[-latex,dotted ](-4,0)--(4,0)node[right]{\textcolor{black} {Re$k$}};

\draw[dotted](0,0)circle(1.5);

\coordinate (a) at (0.8,1.27);
\fill[red] (a) circle (1pt);
\node at (1.1,1.25) {$\nu_m$};
\node at (2,1.8) {$\mu_n$};
\coordinate (a) at (-0.8,1.27);
\fill[red] (a) circle (1pt);
\coordinate (a) at (0.8,-1.27);
\fill[red] (a) circle (1pt);
\coordinate (a) at (-0.8,-1.27);
\fill[red] (a) circle (1pt);

\coordinate (a) at (-1.37,-0.6);
\fill[red] (a) circle (1pt);
\coordinate (a) at (1.37,-0.6);
\fill[red] (a) circle (1pt);
\coordinate (a) at (-1.37,0.6);
\fill[red] (a) circle (1pt);
\coordinate (a) at (1.37,0.6);
\fill[red] (a) circle (1pt);

\coordinate (a) at (0.71,0.71);
\fill[blue] (a) circle (1pt);
\coordinate (a) at (-0.71,0.71);
\fill[blue] (a) circle (1pt);
\coordinate (a) at (0.71,-0.71);
\fill[blue] (a) circle (1pt);
\coordinate (a) at (-0.71,-0.71);
\fill[blue] (a) circle (1pt);

\coordinate (a) at (1.78,1.78);
\fill[blue] (a) circle (1pt);
\coordinate (a) at (-1.78,1.78);
\fill[blue] (a) circle (1pt);
\coordinate (a) at (1.78,-1.78);
\fill[blue] (a) circle (1pt);
\coordinate (a) at (-1.78,-1.78);
\fill[blue] (a) circle (1pt);
										\draw[-latex,dotted](0,-2)--(0,2)node[above]{ \textcolor{black}{Im$k$}};
										\draw[brown][-latex,thick](-3,-0.5)--(-1.5,-0.25);
										\draw[teal][-latex,thick](-3,0.5)--(-1.5,0.25);
										\draw[teal][-latex,thick](0,0)--(1.7,0.285)node[above]{\footnotesize \textcolor[rgb]{0.00,0.00,0.00}{$\Sigma_1$}};
										\draw[brown][-latex,thick](0,0)--(1.7,-0.285)node[below]{\footnotesize \textcolor[rgb]{0.00,0.00,0.00}{$\Sigma_2$}};
\node  at (-1.7,0.48) {\footnotesize $\Sigma_1$};
\node  at (-1.7,-0.48) {\footnotesize $\Sigma_2$};
										\coordinate (C) at (-0.2,2.2);
										\coordinate (D) at (2.2,0.2);
		\coordinate (H) at (0.2,1);
		\fill (H) circle (0pt) node[right] {\footnotesize $\Omega_3$};
		\coordinate (M) at (-0.7,-1);
		\fill (M) circle (0pt) node[right] {\footnotesize $\Omega_3$};
										\fill (D) circle (0pt) node[right] {\footnotesize $\Omega_1$};
										\coordinate (J) at (-2.2,-0.2);
										\fill (J) circle (0pt) node[left] {\footnotesize $\Omega_2$};
										\coordinate (k) at (-2.2,0.2);
										\fill (k) circle (0pt) node[left] {\footnotesize $\Omega_1$};
										\coordinate (k) at (2.2,-0.2);
										\fill (k) circle (0pt) node[right] {\footnotesize $\Omega_2$};
										\coordinate (I) at (0.2,0); \coordinate (I) at (0,0);

										\fill (I) 	circle (1pt)node[below] {\footnotesize $O$} ;								
\end{tikzpicture}}
\subfigure[\footnotesize The opened contour $\Sigma$ for the  asymptotic region   ${\rm IV.}\ \xi<-1/4$, which corresponds to  in  the   Figure  \ref{figtheta}-(d).   ]{
	\begin{tikzpicture}[scale=1.2]								
										\draw[brown,thick](0,0)--(3,0.5);
										\draw[brown,thick](0,0)--(-3,0.5);
										\draw[teal,thick](0,0)--(-3,-0.5);
										\draw[teal,thick](0,0)--(3,-0.5);

\draw[dotted](0,0)circle(1.5);

\coordinate (a) at (0.8,1.27);
\fill[red] (a) circle (1pt);
\coordinate (a) at (-0.8,1.27);
\fill[red] (a) circle (1pt);
\coordinate (a) at (0.8,-1.27);
\fill[red] (a) circle (1pt);
\coordinate (a) at (-0.8,-1.27);
\fill[red] (a) circle (1pt);

\coordinate (a) at (-1.37,-0.6);
\fill[red] (a) circle (1pt);
\coordinate (a) at (1.37,-0.6);
\fill[red] (a) circle (1pt);
\coordinate (a) at (-1.37,0.6);
\fill[red] (a) circle (1pt);
\coordinate (a) at (1.37,0.6);
\fill[red] (a) circle (1pt);

\coordinate (a) at (0.71,0.71);
\fill[blue] (a) circle (1pt);
\coordinate (a) at (-0.71,0.71);
\fill[blue] (a) circle (1pt);
\coordinate (a) at (0.71,-0.71);
\fill[blue] (a) circle (1pt);
\coordinate (a) at (-0.71,-0.71);
\fill[blue] (a) circle (1pt);
\node at (1.1,1.25) {$\nu_m$};
\node at (2,1.8) {$\mu_n$};
\coordinate (a) at (1.78,1.78);
\fill[blue] (a) circle (1pt);
\coordinate (a) at (-1.78,1.78);
\fill[blue] (a) circle (1pt);
\coordinate (a) at (1.78,-1.78);
\fill[blue] (a) circle (1pt);
\coordinate (a) at (-1.78,-1.78);
\fill[blue] (a) circle (1pt);

\draw[-latex,dotted ](-4,0)--(4,0)node[right]{\textcolor{black} {Re$k$}};
										\draw[-latex,dotted](0,-2)--(0,2)node[above]{ \textcolor{black}{Im$k$}};
										\draw[teal,thick][-latex](0,0)--(-1.7,-0.285)node[below]{\footnotesize \textcolor[rgb]{0.00,0.00,0.00}{$\Sigma_2$}};
										\draw[brown,thick][-latex](0,0)--(-1.7,0.285)node[above]{\footnotesize \textcolor[rgb]{0.00,0.00,0.00}{$\Sigma_1$}};
										\draw[brown,thick][-latex](3,0.5)--(1.5,0.25);
\node  at (1.7,0.48) {\footnotesize $\Sigma_1$};
\node  at (1.7,-0.48) {\footnotesize $\Sigma_2$};										\draw[teal][-latex](3,-0.5)--(1.5,-0.25);
										\coordinate (C) at (-0.2,2.2);
										\coordinate (D) at (2.2,0.2);
		\coordinate (H) at (0.2,1);
		\fill (H) circle (0pt) node[right] {\footnotesize $\Omega_3$};
		\coordinate (M) at (-0.7,-1);
		\fill (M) circle (0pt) node[right] {\footnotesize $\Omega_3$};
										\fill (D) circle (0pt) node[right] {\footnotesize $\Omega_1$};
										\coordinate (J) at (-2.2,-0.2);
										\fill (J) circle (0pt) node[left] {\footnotesize $\Omega_2$};
										\coordinate (k) at (-2.2,0.2);
										\fill (k) circle (0pt) node[left] {\footnotesize $\Omega_1$};
										\coordinate (k) at (2.2,-0.2);
										\fill (k) circle (0pt) node[right] {\footnotesize $\Omega_2$};
										\coordinate (I) at (0.2,0); \coordinate (I) at (0,0);

										\fill (I) 	circle (1pt)node[below] {\footnotesize $O$} ;								
\end{tikzpicture}
}
\caption{\footnotesize Opening the real axis $\mathbb{R}$ at $k=0$ with sufficient small angle.
 The opened contours   $\Sigma_1$ and   $\Sigma_2$  decay  along  blue region  and   white  region
    in Figure  \ref{figtheta}, respectively.  The discrete spectrum on unitary circle $|k|=1$ denoted by  ($\textcolor{red}{\bullet} $)   and that
    on $\mathbb{C}\setminus \{ k:\ |k|=1\}$  denoted by ($\textcolor{blue}{\bullet} $).    }
								\label{figR2}
\end{center}
\end{figure}

We now introduce the extension functions

\begin{Proposition}\label{proR}
	There are  $R_\ell(k)$: $\bar{\Omega}_\ell\to \mathbb{C}$, $\ell=1,2$  with the following   boundary values
	\begin{align}
&R_1(k)=\Bigg\{\begin{array}{ll}
	 \rho(k)T_+(k)^2 & k\in \mathbb{R},\\
	0  &k\in \Sigma_1,\\
\end{array} \hspace{0.6cm}
R_2(k)=\Bigg\{\begin{array}{ll}
	\bar{\rho}(k)T_-(k)^{-2} &k\in \mathbb{R}, \\
	0 &k\in \Sigma_2,\\
\end{array} \nonumber
\end{align}	
$R_1(k)$ and $R_2(k)$  admit the following estimates, respectively
	\begin{align}
 &|\bar{\partial}R_\ell(k)|\lesssim|r'(|k|)|+|k|^{-1/2}, \  \text{ $ k\in \Omega_\ell$,} \  \ell=1,2,\label{dbarRj}\\
&	\bar{\partial}R_\ell(k)=0,\hspace{0.5cm}  k\in   \Omega_3. \nonumber
	\end{align}

\end{Proposition}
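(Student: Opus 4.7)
The plan is to construct $R_1$ and $R_2$ by smooth interpolation in polar coordinates between the prescribed boundary data on $\mathbb{R}$ and zero on $\Sigma_\ell$, and then estimate $\bar\partial R_\ell$ by a direct computation. First I would parametrize each connected component of $\bar\Omega_\ell$ by $k = s e^{i\omega}$, $s = |k|$, so that $\omega$ runs between the argument of $\mathbb{R}$ (equal to $0$ or $\pi$) and the argument of the adjacent ray of $\Sigma_\ell$. On the upper-right component of $\Omega_1$, where $\omega\in[0,\varphi]$, I would set
\begin{equation*}
R_1(k)=f_1(s)\cos\!\Bigl(\tfrac{\pi\omega}{2\varphi}\Bigr),\qquad f_1(s):=\rho(s)\,T_+(s)^2,
\end{equation*}
so that $R_1|_{\omega=0}=\rho(k)T_+(k)^2$ and $R_1|_{\omega=\varphi}=0$. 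Analogous formulae, obtained by the reflections $k\mapsto-k$ and $k\mapsto\bar k$, extend the construction to the remaining three components of $\Omega_1$ and to $\Omega_2$ (with $f_2(s):=\bar\rho(s)T_-(s)^{-2}$). Setting $R_\ell\equiv0$ outside $\bar\Omega_\ell$ is continuous across $\Sigma_\ell$ by design, and so $\bar\partial R_\ell=0$ on $\Omega_3$ is automatic.

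Second, the polar form $\bar\partial=\tfrac12 e^{i\omega}(\partial_s+i s^{-1}\partial_\omega)$ gives
\begin{equation*}
\bar\partial R_1(k)=\tfrac{e^{i\omega}}{2}\!\left[f_1'(s)\cos\!\tfrac{\pi\omega}{2\varphi}-\tfrac{i\pi}{2\varphi\,s}\,f_1(s)\sin\!\tfrac{\pi\omega}{2\varphi}\right],
\end{equation*}
so the estimate \eqref{dbarRj} reduces to bounding $|f_1'(s)|$ by $|r'(s)|$ plus bounded terms, and $s^{-1}|f_1(s)|$ by $s^{-1/2}$. By Proposition \ref{pqr} we have $r\in H^{1,1}(\mathbb{R})$, and the formula \eqref{tk} shows that $T_+$ and its derivative are uniformly bounded on compact subsets of $\bar\Omega_1$ away from $\{\pm1\}\cup\mathcal{Z}$; together these give $|f_1'(s)|\lesssim|r'(s)|+|\rho(s)|$ on such subsets, which is controlled by $|r'(|k|)|$ up to a bounded term.

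Third, for the $s^{-1/2}$ term I would use $r(0)=0$ together with the Sobolev embedding $H^1(\mathbb{R})\hookrightarrow C^{1/2}(\mathbb{R})$, yielding $|r(s)|=|r(s)-r(0)|\lesssim s^{1/2}\|r\|_{H^1}$ near $s=0$. Since $T(0)$ is finite (at $k=0$ the integral in \eqref{tk} converges because $\log(1-|r|^2)=O(s^2)$), this gives $|f_1(s)|\lesssim s^{1/2}$, whence the second term of \eqref{dbarRj} follows.

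The main obstacle will be the behavior of the interpolant near the singular points $k=\pm1$ and near the discrete-spectrum points $\zeta_n\in\mathcal{Z}$, where $T(k)$ is unbounded or meromorphic and the naive ansatz fails to be smooth. I would multiply the interpolation by a fixed smooth cutoff vanishing in small neighborhoods of these points (all of which are separated from $\mathbb{R}$ in the regimes $\xi\in(-\infty,-1/4)\cup(2,\infty)$ under consideration), and absorb the bounded derivative of the cutoff into the first term of the estimate; the symmetries of $R_\ell$ consistent with RHP \ref{rhp2} can then be imposed by symmetrizing the ansatz under $k\mapsto\bar k^{-1}$ and $k\mapsto-k$.
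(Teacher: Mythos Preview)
The paper does not give a proof of Proposition~\ref{proR}; it states the result and later, for the analogous Proposition~\ref{prod}, writes ``In a similar way to \cite{mch8}, it can be shown\ldots''. Your construction---polar interpolation $R_1(k)=f_1(|k|)\cos(\pi\omega/2\varphi)$ between the boundary datum on $\mathbb{R}$ and zero on $\Sigma_1$, followed by the polar $\bar\partial$ computation and the Cauchy--Schwarz bound $|r(s)|=|r(s)-r(0)|\le s^{1/2}\|r'\|_{L^2}$---is precisely this standard $\bar\partial$-extension argument, and the main body of your proposal is correct.

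Your last paragraph, however, misidentifies the obstacles. The discrete-spectrum points $\zeta_n$ are never an issue: the paper explicitly chooses the opening angle $\varphi$ small enough that $\bar\Omega_1\cup\bar\Omega_2$ contains no pole, so $T(k)$ is analytic and bounded there and no cutoff is needed. More importantly, $\pm1\in\mathbb{R}$ are \emph{not} separated from $\mathbb{R}$, so a cutoff vanishing near $\pm1$ would destroy the required boundary value $R_1|_{\mathbb{R}}=\rho T_+^2$. In fact no fix is needed at $\pm1$ either: the singularities of $M(k)$ at $\pm1$ were already removed by the passage to $M^J$ in \eqref{trans1}, and under the standing hypothesis $r\in H^{1,1}(\mathbb{R})$ from Proposition~\ref{pqr} the boundary datum $\rho(s)T_+(s)^2$ is continuous with $L^2$ derivative on all of $\mathbb{R}$, including at $s=\pm1$. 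You should simply drop the final paragraph; the first three steps already constitute a complete proof once you note that $T_+$ and $T_+'$ are bounded on $\mathbb{R}$ (trivially for $\xi>2$ where $T$ is a finite Blaschke product, and by standard Cauchy-integral estimates for $\xi<-1/4$).
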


Now we define a new matrix function
\begin{equation}
R^{(2)}(k )=\left\{\begin{array}{lll}
\left(\begin{array}{cc}
1 & 0\\
R_1(k )e^{2it\theta} & 1
\end{array}\right),  &k\in \Omega_1;\\
\\
\left(\begin{array}{cc}
1 & R_2(k )e^{-2it\theta}\\
0 & 1
\end{array}\right), & k\in \Omega_2;\\
\\
I,  &k\in   \Omega_3,\\
\end{array}\right.\label{R(2)-}
\end{equation}
then  new function
\begin{equation}\label{trans3}
M^{(2)}(k):=M^{(1)}(k)R^{(2)}(k)
\end{equation}
  is continuous in $\mathbb{C}\setminus\mathcal{Z}$ satisfying a hybrid $\bar{\partial}$-RH problem
\begin{equation*}
M^{(2)}_+(k)=M^{(2)}_-(k)V^{(2)}(k),\quad k\in \Sigma
\end{equation*}
 with the same residue conditions (\ref{res1})-(\ref{res4}) and  the following $\bar{\partial}$-equation
\begin{equation}
\bar{\partial}M^{(2)}(k)=M^{(2)}(k)\bar{\partial}R^{(2)}(k),
\end{equation}
where
\begin{equation}
	\bar{\partial}R^{(2)}(k )=\left\{\begin{array}{lll}
		\left(\begin{array}{cc}
			0 & 0\\
			\bar{\partial}R_1(k )e^{2it\theta} & 0
		\end{array}\right), & k\in \Omega_1;\\
		\\
		\left(\begin{array}{cc}
			0 & \bar{\partial}R_2(k )e^{-2it\theta}\\
			0 & 0
		\end{array}\right),  &k\in \Omega_2;\\
		\\
		0,  &k\in \Omega_3.\\
	\end{array}\right.
\end{equation}

To solve this hybrid $\bar{\partial}-$RH problem, we decompose it into the form
\begin{equation}\label{m3}
M^{(2)}(k)  =M^{(3)}(k)M^r(k),
\end{equation}
where $M^r(k)$ and $M^{(3)}(k)$  are  the  solutions  of a pure RH problem  and  a pure $\bar{\partial}-$Problem, respectively.
The  pure RH problem  is given as follows.

\begin{RHP}\label{RHP15}
Find a function  $  M^{r}(k)$ with properties
\begin{itemize}
  \item
$M^{r}(k)$ is  meromorphic  in $\mathbb{C};$
\item $M^{r}(k) = I+\mathcal{O}(k^{-1}),\hspace{0.5cm}k \rightarrow \infty;$
\item For $k\in\mathbb{C}$, $\bar{\partial}M^r(k)=0;$
\item $M^{r}(k)$ has the same residue conditions as  (\ref{res1})-(\ref{res4}).

\end{itemize}
\end{RHP}
It has been proved that the RH problem \ref{RHP15} is solvable and there exists an unique solution $M^{r}(k)$
   as shown in Section 7 of \cite{mch8}.

The  $M^{(3)}(k)$  satisfies  a pure $\bar{\partial}$-equation
\begin{equation}
\bar{\partial}M^{(3)}(k)=M^{(3)}(k)W^{(3)}(k),\ \ k\in \mathbb{C},\label{m3}
\end{equation}
where
\begin{equation}
W^{(3)}(k)=M^{r}(k)\bar{\partial}R^{(2)}(k)M^{r}(k)^{-1}.\nonumber
\end{equation}

\subsection{Contribution from a  pure $\bar{\partial}$-problem}
\qquad Now we consider the contribution from  $M^{(3)}(k)$ defined by the $\bar{\partial}$-equation  \eqref{m3}.
The  solution of the $\bar{\partial}$-problem for  $M^{(3)}(k)$ is equivalent to the integral equation
\begin{equation}\label{m3-1}
M^{(3)}(k)=I+\frac{1}{\pi}\iint_\mathbb{C}\dfrac{M^{(3)}(s)W^{(3)} (s)}{s-k}dm(s),
\end{equation}
where $m(s)$ is the Lebesgue measure on $\mathbb{C}$.
Define  the    Cauchy integral  operator
\begin{equation*}\label{ckdef}
S (f)=\frac{1}{\pi}\iint_C\dfrac{f(s)W^{(3)} (s)}{s-k}dm(s).
\end{equation*}
then   (\ref{m3-1})  can be rewritten as
\begin{equation}
 \left(I-S\right)M^{(3)}(k)=I.\label{ped}
\end{equation}

Further it can be shown that
\begin{equation*}
\parallel S\parallel_{L^{\infty}\to L^{\infty}}\lesssim t^{-1/2},
\end{equation*}
which implies that (\ref{ped}) is solvable. As $k\to i$, $M^{(3)}(k)$ has asymptotic expansion
	\begin{equation}
		M^{(3)}(k)=M^{(3)}_0 +M^{(3)}_1 (k-i)+\mathcal{O}((k-i)^{2}), \label{erw}
	\end{equation}
where
\begin{align}
&M^{(3)}_0=I+\frac{1}{\pi}\iint_\mathbb{C}\dfrac{M^{(3)}(s)W^{(3)} (s)}{s-i}dm(s), \\
&M^{(3)}_1 =\frac{1}{\pi}\iint_C\dfrac{M^{(3)}(s)W^{(3)} (s)}{(s-i)^2}dm(s),
\end{align}
whose estimates  are  derived as  follows

\begin{Proposition}\label{estm2}
	There exists a  constant $\kappa <1/4 $, such that
	\begin{align}
		\parallel M^{(3)}(i)-I\parallel \lesssim t^{-1+\kappa}, \  \ \ |M^{(3)}_1(y,t)|\lesssim t^{-1+\kappa}.  \label{m3i}
	\end{align}

\end{Proposition}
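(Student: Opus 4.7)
The plan is to derive both estimates in Proposition \ref{estm2} from three ingredients: (i) small-norm invertibility of $I - S$ for large $t$, giving a bounded $M^{(3)}$, (ii) the uniform boundedness of the model solution $M^r$ on $\mathbb{C}\setminus\mathcal{Z}$ (away from the poles), and (iii) the pointwise bound $|\bar{\partial}R_\ell| \lesssim |r'(|k|)| + |k|^{-1/2}$ from Proposition \ref{proR} combined with the exponential weight $|e^{\pm 2it\theta}|=e^{-2t|\operatorname{Im}\theta|}$ in $\Omega_1\cup\Omega_2$. The operator-norm estimate $\|S\|_{L^\infty\to L^\infty}\lesssim t^{-1/2}$ already supplied before \eqref{erw} gives $\|M^{(3)}\|_{L^\infty}\lesssim 1$ together with $\|M^{(3)}-I\|_{L^\infty}\lesssim t^{-1/2}$, which will be used as an a priori input.

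To estimate $M^{(3)}_0$ and $M^{(3)}_1$, I would write
\begin{equation*}
M^{(3)}(i)-I = \frac{1}{\pi}\iint_{\Omega_1\cup\Omega_2}\frac{M^{(3)}(s)W^{(3)}(s)}{s-i}\,dm(s),\qquad M^{(3)}_1 = \frac{1}{\pi}\iint_{\Omega_1\cup\Omega_2}\frac{M^{(3)}(s)W^{(3)}(s)}{(s-i)^2}\,dm(s),
\end{equation*}
and note that the opening angle $\varphi$ is chosen small enough that $i\notin\overline{\Omega_1\cup\Omega_2}$, so the kernels $|s-i|^{-j}$ are uniformly bounded on the integration domain. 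Using $\|M^{(3)}\|_{L^\infty}\lesssim 1$ and $\|M^r\|_{L^\infty}\lesssim 1$, both quantities are controlled by $\iint_{\Omega_\ell}|\bar{\partial}R_\ell(s)|e^{-2t|\operatorname{Im}\theta(s)|}dm(s)$. Parametrize $\Omega_1$ by $s=u+iv$ with $0<v<u\tan\varphi$ (and similarly for $\Omega_2$). In the regions $\xi>2$ and $\xi<-1/4$ the configuration of $\operatorname{sign}\operatorname{Im}\theta$ depicted in Figure \ref{figtheta}(a),(d) yields $\operatorname{Im}\theta(s)\geq c\, v\, G(u,v)$ with a positive factor $G(u,v)$ that stays away from zero except at $s=0$ and $s=\infty$, where $G$ actually blows up because of the $(k^2-1)/(4k)$ structure of $\theta$. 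I would then split $|\bar{\partial}R_\ell|\lesssim |r'(|s|)|+|s|^{-1/2}$ and handle the two pieces separately by Cauchy--Schwarz in $u$: for the $r'$-piece use $\|r'\|_{L^2}\leq \|r\|_{H^{1,1}}<\infty$ from Proposition \ref{pqr} together with $\int_0^\infty e^{-2ctv}dv\lesssim t^{-1}$ after an additional $v^{1/2}$ arising from Cauchy--Schwarz, and for the $|s|^{-1/2}$-piece use direct integration $\iint |s|^{-1/2}e^{-2ctvG(u,v)}du\,dv$ split over $|u|\leq t^{-\alpha}$ and $|u|\geq t^{-\alpha}$ for a suitable $\alpha\in(0,1)$. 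Balancing $\alpha$ gives an estimate of the form $t^{-1+\kappa}$ for any fixed $\kappa<1/4$, as in the analogous calculations in \cite{DandMNLS,YF1,YF3,mch8}.

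The main obstacle is the non-$L^2$ singular summand $|k|^{-1/2}$ in the bound on $\bar{\partial}R_\ell$, which is singular both at $k=0$ and (by symmetry considerations, through $T(k)$ and the boundary behavior of $\rho$) at infinity in the weighted sense. Near $s=0$ the pole of $\theta$ at the origin produces a very strong exponential suppression $e^{-c t v/|s|^2}$ that tames the $|s|^{-1/2}$ singularity after integration; near $s=\infty$ the linear growth of $\operatorname{Im}\theta$ provides analogous decay. Quantifying the resulting trade-off produces the loss $t^\kappa$ in place of a clean $t^{-1}$ bound, and selecting $\kappa<1/4$ (rather than an arbitrarily small positive number) is exactly what is needed to absorb the largest admissible power coming from the $|s|^{-1/2}$-piece while still benefiting from the exponential decay. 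Once both integrals are bounded, combining with the factor $\|M^{(3)}\|_{L^\infty}\lesssim 1$ gives the stated estimates $\|M^{(3)}(i)-I\|\lesssim t^{-1+\kappa}$ and $|M^{(3)}_1(y,t)|\lesssim t^{-1+\kappa}$.
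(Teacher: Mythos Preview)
Your plan is the standard $\bar\partial$-steepest descent argument, and since the paper states Proposition~\ref{estm2} without proof (deferring implicitly to \cite{mch8,DandMNLS,YF1,YF3}), there is no alternative route to compare against: your outline is precisely the approach those references use. The ingredients you list---boundedness of $M^{(3)}$ from $\|S\|_{L^\infty\to L^\infty}\lesssim t^{-1/2}$, boundedness of $M^r$ on $\Omega_1\cup\Omega_2$ (no poles there by choice of $\varphi$), and $\operatorname{dist}(i,\overline{\Omega_1\cup\Omega_2})>0$---are all correct and sufficient to reduce matters to estimating $\iint_{\Omega_\ell}|\bar\partial R_\ell(s)|e^{-2t|\operatorname{Im}\theta(s)|}\,dm(s)$.

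There is, however, one genuine gap. The sectors $\Omega_1,\Omega_2$ in Section~\ref{sec3} are \emph{unbounded}, and the bound $|\bar\partial R_\ell|\lesssim |r'(|k|)|+|k|^{-1/2}$ from Proposition~\ref{proR}, taken literally on all of $\Omega_\ell$, is not integrable at infinity: with only $\operatorname{Im}\theta\sim cv$ for large $|s|$ (your ``linear growth''), one gets $\int_1^\infty u^{-1/2}\bigl(\int_0^{u\tan\varphi}e^{-2ctv}\,dv\bigr)du\sim t^{-1}\int_1^\infty u^{-1/2}\,du=\infty$. Your proposed splitting at $|u|=t^{-\alpha}$ handles the singularity at $s=0$ but does nothing for large $u$. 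The fix is to recall that the $|k|^{-1/2}$ term in Proposition~\ref{proR} comes from bounding $|r(|k|)-r(0)|\le\|r'\|_{L^2}|k|^{1/2}$ and then dividing by $|k|$; this is sharp near $0$ but wasteful at infinity. For large $|k|$ you should instead use the finer estimate $|\bar\partial R_\ell|\lesssim |r'(|k|)|+|k|^{-1}|r(|k|)|$ together with the pointwise decay $|r(k)|\lesssim\langle k\rangle^{-1}$ (which follows from $r\in H^{1,1}(\mathbb R)$, Proposition~\ref{pqr}), giving $|k|^{-1}|r(|k|)|\lesssim|k|^{-2}$ and an integrable tail. With this correction the remainder of your argument goes through and produces the claimed $t^{-1+\kappa}$ rate.
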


\subsection{Proof of Theorem 1--Case I. }
\qquad

We  now  construct the long-time asymptotics for   the 2-mCH equation \eqref{2mch-p}-\eqref{2mch-q}
to complete the proof of Theorem 1--Case I.

Inverting all above transformations (\ref{trans1}), (\ref{trans2}), (\ref{trans3}) and (\ref{m3}), we have
\begin{align}
M(k)=&F(k)M^{(3)}(k) M^r(k)R^{(2)}(k)^{-1}T(k)^{-\sigma_3}. \label{ope}
\end{align}
To  reconstruct   $p(x,t), q(x,t)$ by using \eqref{restr},  we  take    $k\to i$  along the path in $ \Omega_3$.
In this case,  $ R^{(2)}(k)=I$.

Consider the  asymptotic expansions  of $F(k)$  and $T(k)$ at $k=i$
\begin{align}
&F(k)=F_1+F_2(k-i)+O((k-i)^2),\label{fasym}\\
&T(k)=T_0+T_1(k-i)+\mathcal{O}((k-i)^2).\label{tasym}
\end{align}
 Further substituting  expansions (\ref{erw}), (\ref{fasym}) and (\ref{tasym}) into (\ref{ope}),
   we obtain that
\begin{align}			
M(k)=M_0 +M_1 (k-i)+\mathcal{O}((k-i)^2),			
\end{align}				
where
$$
M_0(y,t)=F_1M^{r}(i)T_0^{-\sigma_3}+\mathcal{O}(t^{-1+\kappa}),\quad M_1(y,t)=F_2M^{r}(i)T_0^{-\sigma_3}+F_1M^{r}(i)T_1^{-\sigma_3}+\mathcal{O}(t^{-1+\kappa}).
$$

By using the reconstruction formula \eqref{restr}, we further get the long-time asymptotic behavior for   the 2-mCH equation \eqref{2mch-p}-\eqref{2mch-q} as follows
\begin{align}
&(\log p)_x=p^{sol}(x,t)+O(t^{-1+\kappa}),\nonumber\\
&(\log q)_x=q^{sol}(x,t)+O(t^{-1+\kappa}),\nonumber
\end{align}
with
\begin{equation}
x(y,t)=y +S_1(y,t)+\mathcal{O}(t^{-1+\kappa}),
\end{equation}
where
\begin{align}
&p^{sol}(x,t)=- f_0(x,t)\frac{M_{0,11}M_{0,12}}
{M_{0,21}M_{0,22}-M_{0,11}M_{0,12}}-1,\label{pr1}\\
&q^{sol}(x,t)=1-f_0(x,t)\frac{M_{0,21}M_{0,22}} {M_{0,11}M_{0,12}-M_{0,21}M_{0,22}},\label{qr1}\\
&S_1(y,t)=\log M_{0,11}-\log M_{0,22}+\log(q^{sol}-q^{sol}_x)-\log(p^{sol}+p^{sol}_x),\label{s1}
\end{align}
and
\begin{equation*}
f_0(x,t)=\partial_x\log \left(M_{1,12}M_{1,21}\right).
\end{equation*}
Therefore, we prove the first item of Theorem \ref{last}.

\section{Long-time asymptotics in regions with  phase points}\label{sec4}

\qquad In this section, we study the long-time asymptotics for the 2-mCH equation \eqref{2mch-p}-\eqref{2mch-q} in the space-time  regions  II. $0< {\xi}<2$  and   III. $-1/4< {\xi}<0$.

\subsection{Normalization  of the RH problem}

\qquad
In  the regions  II and   III,  there are 4 and 8  stationary phase points on $\mathbb{R}$, respectively.
For convenience, we introduce a function to denote  the number of stationary phase points
\begin{align}
& n( {\xi})=\left\{\begin{array}{l}
4,\quad \text{for\   } 0< {\xi}<2, \\[4pt]
8,\quad \text{for\  } -1/4< {\xi}<0.
\end{array}\right.
\end{align}
In  the function $T(k)$   defined by   \eqref{tk}, take
\begin{align}
& I( {\xi})=\left\{\begin{array}{ll}
\left(k_1, k_2\right) \cup\left(k_3, k_4\right),& \text{for\ } 0< {\xi}<2, \\[4pt]
\left(k_2, k_3\right) \cup\left(k_4, k_5\right) \cup\left(k_6, k_7\right), &\text{for\ } -1/4< {\xi}<0.
\end{array}\right.
\end{align}

Then $T(k)$ admits the following property:\\
As $k\to k_j$ along the ray $k_j+e^{i\phi}\mathbb{R}^+$ with $|\phi|<\pi$, we have  the estimate
	\begin{align}
	\big|T(k)-T_0(k_j)\left[ \eta( k_j)(k-k_j)\right]^{\eta( k_j) i\nu(k_j)}\big|\lesssim \parallel r\parallel_{H^{1}(\mathbb{R})}|k-k_j|^{1/2},
	\end{align}
	where
	\begin{align}
&	T_0(k)=\prod_{n\in \Delta}\dfrac{k-\zeta_n}{\bar{\zeta}_n^{-1}k-1}e^{i\beta(k,k_j)},\nonumber \\
&\beta(k,k_j)=-\eta(k_j)\log\left( \eta(k_j)(k-k_j)+1\right) \nu(k_j)+\int_{I(\xi)}\frac{\nu(s)}{s-k}ds,\nonumber
	\end{align}
 and for  $ j=1,\cdots, n(\xi)$, the sign function  $\eta(  k_j)$ is defined as
\begin{align}
	\eta(  k_j)=\left\{ \begin{array}{ll}
		(-1)^{j+1},   &\text{for\ \ } 0< {\xi}<2,\\[4pt]
		(-1)^{j},   &  \text{for\ \  } -1/4< {\xi}<0.
	\end{array}\right.\label{eta}
\end{align}

We make  a transformation
\begin{equation}
M^{(1)}(k): =M^J(k)T(k)^{\sigma_3},\label{trans2-1}
\end{equation}
then  $M^{(1)}(k)$  satisfies the residue conditions \eqref{res1}-\eqref{res4}
 and   the jump condition
\begin{equation*}
M^{(1)}_+(k)=M^{(1)}_-(k)V^{(1)}(k),\quad k\in\mathbb{R},
\end{equation*}
where
\begin{equation}\label{v2}	
V^{(1)}(k)=\left(\begin{array}{cc}
1 & \bar\rho(k)T_-(k)^{-2}e^{-2it\theta}\\0&1
\end{array}\right)\left(\begin{array}{cc}
1 & 0\\ -{\rho}(k)T_+(k)^2e^{2it\theta}  & 1
\end{array}\right),
\end{equation}
in which the reflection coefficient is defined as
\begin{equation}
\rho(k,\xi) =\left\{\begin{array}{lll}
r(k),& k\in 	\mathbb{R}\setminus I(\xi),\\
-\dfrac{r(k) }{1-|r(k)|^2},&k\in I(\xi).
\end{array}\right.
\end{equation}


\begin{figure}
\begin{center}
	\subfigure[  The opened contour $\Sigma  $ for the asymptotic region   ${\rm II.}\   0<\xi<2$, which corresponds to  in  the   Figure  \ref{figtheta}-(b).  There
are four  phase points on $ \mathbb{R}$.  ]{
\begin{tikzpicture}[scale=0.9]
\draw[-latex,dotted](-5,0)--(5,0)node[right]{ \textcolor{black}{Re$k$}};
\draw[-latex,dotted ](0,-2.5)--(0,2.5)node[right]{\textcolor{black}{Im$k$}};
\draw [brown,thick] (-4,-0.6) to [out=0,in=180] (-2,0.6)
to [out=0,in=180] (0,-0.6) to [out=0,in=180] (2,0.6)  to  [out=0,in=180] (4,-0.6);

\draw [teal,thick](-4,0.6) to [out=0,in=180] (-2,-0.6)
to [out=0,in=180] (0,0.6) to [out=0,in=180] (2,-0.6)  to [out=0,in=180] (4,0.6);
\draw[-latex,teal,thick](-3.2,0.21)--(-3.15,0.16);
\draw[-latex,brown,thick](-3.2,-0.21)--(-3.15,-0.16);
\draw[-latex,teal,thick ](-2.8,-0.21)--(-2.85,-0.15);
\draw[-latex,brown,thick](-2.8,0.21)--(-2.85,0.15);
\draw[dotted](0,0)circle(1.5);
\draw[-latex,brown,thick](-1.2,0.21)--(-1.3,0.3);
\draw[-latex,teal,thick](-1.2,-0.21)--(-1.3,-0.3);
\draw[-latex,brown,thick](-0.8,-0.21)--(-0.7,-0.3);
\draw[-latex,teal,thick](-0.8,0.21)--(-0.7,0.3);

\draw[-latex,teal,thick](0.8,0.21)--(0.85,0.16);
\draw[-latex,brown,thick](0.8,-0.21)--(0.85,-0.16);
\draw[-latex,teal,thick](1.2,-0.21)--(1.15,-0.15);
\draw[-latex,brown,thick](1.2,0.21)--(1.15,0.15);

\draw[-latex,brown,thick](2.8,0.21)--(2.7,0.3);
\draw[-latex,teal,thick](2.8,-0.21)--(2.7,-0.3);
\draw[-latex,brown,thick](3.2,-0.21)--(3.3,-0.3);
\draw[-latex,teal,thick](3.2,0.21)--(3.3,0.3);

\coordinate (a) at (0.71,0.71);
\fill[blue] (a) circle (1.2pt);
\coordinate (a) at (-0.71,0.71);
\fill[blue] (a) circle (1.2pt);
\coordinate (a) at (0.71,-0.71);
\fill[blue] (a) circle (1.2pt);
\coordinate (a) at (-0.71,-0.71);
\fill[blue] (a) circle (1.2pt);


\coordinate (a) at (1.3,0.78);
\fill[red]  (a) circle (1.2pt);
\coordinate (a) at (-1.3,0.78);
\fill[red]  (a) circle (1.2pt);
\coordinate (a) at (0.67,1.35);
\fill[red]  (a) circle (1.2pt);
\coordinate (a) at (-0.67,1.35);
\fill[red]  (a) circle (1.2pt);

\coordinate (a) at (1.3,-0.78);
\fill[red]  (a) circle (1.2pt);
\coordinate (a) at (-1.3,-0.78);
\fill[red]  (a) circle (1.2pt);
\coordinate (a) at (0.67,-1.35);
\fill[red]  (a) circle (1.2pt);
\coordinate (a) at (-0.67,-1.35);
\fill[red]  (a) circle (1.2pt);

\coordinate (a) at (1.77,1.77);
\fill[blue] (a) circle (1.2pt);
\coordinate (a) at (-1.77,1.77);
\fill[blue] (a) circle (1.2pt);
\coordinate (a) at (1.77,-1.77);
\fill[blue] (a) circle (1.2pt);
\coordinate (a) at (-1.77,-1.77);
\fill[blue] (a) circle (1.2pt);
\node at (1,1.5) {$\nu_m$};
\node at (2.1,1.8) {$\mu_n$};

\node  at (4.3,0.6) {$\Sigma_1$};
\node  at (4.3,-0.6) {$\Sigma_2$};
\coordinate (A) at (1,0);
\fill (A) circle (1.2pt) node[right]{$k_2$};
\coordinate (B)  at (3,0);
\fill (B) circle (1.2pt) node[right]{$k_1$};
\coordinate (C)  at (-1,0);
\fill (C) circle (1.2pt) node[right]{$k_3$};
\coordinate (D)  at (-3,0);
\fill (D) circle (1.2pt) node[right]{$k_4$};
\coordinate (I) at (0,0);
		\fill[red] (I) circle (1.2pt) node[below] {$0$};
\end{tikzpicture}
		\label{case1}}
	\subfigure[ The opened contour $\Sigma  $ for the asymptotic region   ${\rm III.}\  -1/4<\xi<0$,  which  corresponds to  in  the   Figure  \ref{figtheta}-(c). There
are eight   phase points on $ \mathbb{R}$. ]{
		\begin{tikzpicture}[scale=0.9]
\draw[-latex,dotted](-6.5,0)--(6.8,0)node[right]{ \textcolor{black}{Re$k$}};
\draw[-latex,dotted  ](0,-3)--(0,3)node[right]{\textcolor{black}{Im$k$}};
\draw [brown,thick] (-6,-0.6)to [out=0,in=180](-4.5,0.6)to [out=0,in=180](-3,-0.6) to [out=0,in=180] (-1.5,0.6)
to [out=0,in=180] (0,-0.6) to [out=0,in=180] (1.5,0.6)  to  [out=0,in=180] (3,-0.6) to [out=0,in=180] (4.5,0.6) to
[out=0,in=180] (6,-0.6);
\draw [teal,thick](-6,0.6)to [out=0,in=180](-4.5,-0.6)to [out=0,in=180](-3,0.6) to [out=0,in=180] (-1.5,-0.6)
to [out=0,in=180] (0,0.6) to [out=0,in=180] (1.5,-0.6)  to [out=0,in=180] (3,0.6) to [out=0,in=180] (4.5,-0.6) to  [out=0,in=180] (6,0.6);
\draw[-latex,teal,thick](-5.5,0.35)--(-5.4,0.23);
\draw[-latex,brown,thick](-5,0.36)--(-5.1,0.23);
\draw[-latex,brown,thick](-3.9,0.23)--(-4,0.35);
\draw[-latex,teal,thick](-3.6,0.23)--(-3.5,0.36);
\draw[-latex,teal,thick](-2.5,0.35)--(-2.4,0.23);
\draw[-latex,brown,thick](-2,0.36)--(-2.1,0.23);
\draw[-latex,brown,thick](-0.9,0.23)--(-1,0.35);
\draw[-latex,teal,thick](-0.6,0.23)--(-0.5,0.36);
\node at (1.5,2.3) {$\nu_m$};
\node at (3,2.7) {$\mu_n$};
\draw[-latex,teal,thick](5.4,0.23)--(5.5,0.35);
\draw[-latex,brown,thick](5.1,0.23)--(5,0.36);
\draw[-latex,brown,thick](4,0.35)--(3.9,0.23);
\draw[-latex,teal,thick](3.5,0.36)--(3.6,0.23);
\draw[-latex,teal,thick](2.4,0.23)--(2.5,0.35);
\draw[-latex,brown,thick](2.1,0.23)--(2,0.36);
\draw[-latex,brown,thick](1,0.35)--(0.9,0.23);
\draw[-latex,teal,thick](0.5,0.36)--(0.6,0.23);

\draw[-latex,brown,thick](-5.5,-0.35)--(-5.4,-0.23);
\draw[-latex,teal,thick](-5,-0.36)--(-5.1,-0.23);
\draw[-latex,teal,thick](-3.9,-0.23)--(-4,-0.35);
\draw[-latex,brown,thick](-3.6,-0.23)--(-3.5,-0.36);
\draw[-latex,brown,thick](-2.5,-0.35)--(-2.4,-0.23);
\draw[-latex,teal,thick](-2,-0.36)--(-2.1,-0.23);
\draw[-latex,teal,thick](-0.9,-0.23)--(-1,-0.35);
\draw[-latex,brown,thick](-0.6,-0.23)--(-0.5,-0.36);
\draw[dotted](0,0)circle(2.5);
\coordinate (a) at (-1.07,-1.07);
\fill[blue] (a) circle (1.2pt);
\coordinate (a) at (1.07,-1.07);
\fill[blue] (a) circle (1.2pt);
\coordinate (a) at (-1.07,1.07);
\fill[blue] (a) circle (1.2pt);
\coordinate (a) at (1.07,1.07);
\fill[blue] (a) circle (1.2pt);

\coordinate (a) at (1.2,2.19);
\fill[red]  (a) circle (1.2pt);
\coordinate (a) at (-1.2,2.19);
\fill[red]  (a) circle (1.2pt);
\coordinate (a) at (-1.2,-2.19);
\fill[red]  (a) circle (1.2pt);
\coordinate (a) at (1.2,-2.19);
\fill[red]  (a) circle (1.2pt);

\coordinate (a) at (2.245,1.1);
\fill[red]  (a) circle (1.2pt);
\coordinate (a) at (-2.245,-1.1);
\fill[red]  (a) circle (1.2pt);
\coordinate (a) at (2.245,-1.1);
\fill[red]  (a) circle (1.2pt);
\coordinate (a) at (-2.245,1.1);
\fill[red]  (a) circle (1.2pt);

\coordinate (a) at (-2.67,-2.67);
\fill[blue]  (a) circle (1.2pt);
\coordinate (a) at (2.67,-2.67);
\fill[blue]  (a) circle (1.2pt);
\coordinate (a) at (-2.67,2.67);
\fill[blue]  (a) circle (1.2pt);
\coordinate (a) at (2.67,2.67);
\fill[blue]  (a) circle (1pt);
\draw[-latex,brown,thick](5.4,-0.23)--(5.5,-0.35);
\draw[-latex,teal,thick](5.1,-0.23)--(5,-0.36);
\draw[-latex,teal,thick](4,-0.35)--(3.9,-0.23);
\draw[-latex,brown,thick](3.5,-0.36)--(3.6,-0.23);
\draw[-latex,brown,thick](2.4,-0.23)--(2.5,-0.35);
\draw[-latex,teal,thick](2.1,-0.23)--(2,-0.36);
\draw[-latex,teal,thick](1,-0.35)--(0.9,-0.23);
\draw[-latex,brown,thick](0.5,-0.36)--(0.6,-0.23);
\node  at (6.3,0.6) {$\Sigma_1$};
\node  at (6.3,-0.6) {$\Sigma_2$};
\coordinate (A) at (3.75,0);
\fill (A) circle (1pt) node[right]{$k_2$};
\coordinate (B)  at (5.25,0);
\fill (B) circle (1pt) node[right]{$k_1$};
\coordinate (C)  at (2.25,0);
\fill (C) circle (1pt) node[right]{$k_3$};
\coordinate (D)  at (0.75,0);
\fill (D) circle (1pt) node[right]{$k_4$};
\coordinate (I) at (0,0);
		\fill[red] (I) circle (1pt) node[below] {$0$};
\coordinate (E) at (-0.75,0);
		\fill (E) circle (1pt) node[right] {$k_5$};
\coordinate (F) at (-3.75,0);
		\fill (F) circle (1pt) node[right] {$k_7$};
\coordinate (G) at (-5.25,0);
		\fill (G) circle (1pt) node[right] {$k_8$};
\coordinate (H) at (-2.25,0);
		\fill (H) circle (1pt) node[right] {$k_6$};
		\end{tikzpicture}
		\label{case2}}
	\caption{\footnotesize
Opening the real axis $\mathbb{R}$ at phase points  $k_j, \ j=1,\cdots, n(\xi)$ with sufficient small angle.
 The opened contours   $\Sigma_1$ and   $\Sigma_2$  decay  in   blue region  and   white  region
    in Figure  \ref{figtheta}, respectively.  The discrete spectrum on unitary circle $|k|=1$ denoted by  ($\textcolor{red}{\bullet} $)   and
    that
    on $\mathbb{C}\setminus \{ k:\ |k|=1\}$  denoted by ($\textcolor{blue}{\bullet} $).    }
	\label{Fig5}
\end{center}
\end{figure}

\subsection{A hybrid $\bar{\partial}$-RH problem}

\quad\ In this part, we make a continuous extension of $V^{(1)}(k)$ to remove the jump from $\mathbb{R}$.
We open the contour $\mathbb{R}$  in the vicinity with  deformation contours  $\Sigma_1 $  and   $\Sigma_2$ as shown in Figure \ref{Fig5},
such that there is  no  any point in the opened regions  $\Omega_1 $ and $\Omega_2$,
which   denote  the regions surrounded  by  the real axis $\mathbb{R}$ with  $\Sigma_1 $  and   $\Sigma_2$,   respectively.

In a similar way to   \cite{mch8},  it can be shown the following proposition.
\begin{Proposition}\label{prod}
There exist  the functions $R_{\ell}(k)$: $\bar{\Omega}_{\ell}\to \mathbb{C}$, $\ell=1,2$  with the boundary values.
	\begin{align} &R_{1}(k,\xi)=\left\{\begin{array}{ll}
	\rho(k)T_+(k)^2, &\hspace{0.4cm}k\in \mathbb{R},\\[4pt]
 \rho(k_j)T_0(k_j)^2 \left[ \eta( k_j)(k-k_j)\right]^{2i\nu(k_j)},  &\hspace{0.4cm}k\in \Sigma_{1},\\
	\end{array}\right. \\[5pt]
	&R_{2}(k,\xi)=\left\{\begin{array}{ll} \bar{\rho}(k)T_-(k)^{-2},
 &k\in  \mathbb{R},\\[4pt] \bar{\rho}(k_j)T_0(k_j)^{-2}\left[ \eta( k_j)(k-k_j)\right]^{-2i\nu(k_j)}, &k\in \Sigma_{2},
	\end{array} \right.
	\end{align}	
where $j=1,\cdots, n(\xi)$.
The functions  $R_{\ell}(k), \ell=1,2$  admit the following estimates:
	\begin{align}
	&|R_{\ell}(k,\xi)|\lesssim \sin^2(\frac{\pi\arg(k-k_j)}{2\varphi})+ \left(1+ \text{Re}(k)^2\right) ^{-1/2}, \quad       k\in \Omega_{\ell},\label{R}\\
	&|\bar{\partial}R_{\ell}(k, \xi)|\lesssim|r'(\text{Re}k)|+|k-k_j|^{-1/2}, \quad     k\in \Omega_{\ell},  \label{dbarRj3}\\
&\bar{\partial}R_{\ell}(k,\xi)=0,\hspace{0.5cm}  elsewhere.\nonumber
	\end{align}
where $\ell=1,2;   \   j=1,\cdots, n(\xi)$.
\end{Proposition}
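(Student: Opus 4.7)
The plan is to adapt the explicit extension-function construction developed in \cite{mch8} to the unified factorization (\ref{v2}). Because that factorization contains a single piecewise reflection coefficient $\rho$ and a single $T$-conjugation on each side of $\mathbb{R}$, the construction collapses to one extension $R_\ell$ per region $\Omega_\ell$, rather than one per sector per phase point; this is exactly the reduction emphasized in the introduction (from $4n(\xi)$ extensions down to $2$).

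Concretely, I would introduce local polar coordinates $k-k_j = r_j e^{i\phi_j}$ at each phase point, with $\varphi>0$ chosen small enough that the conic sectors $\{|\phi_j|<\varphi\}$ are mutually disjoint for distinct $j$. Choose a partition of unity $\{\chi_j\}_{j=0}^{n(\xi)}$ subordinate to these sectors, with $\chi_j$ supported in a small conic neighborhood of $k_j$ for $j\ge 1$ and $\chi_0$ supported away from every phase point. In $\Omega_1$ I define
\begin{equation*}
R_1(k) \;=\; \sum_{j=1}^{n(\xi)} \chi_j(k)\Bigl[\, B_j(k) + \cos^2\!\bigl(\tfrac{\pi \phi_j}{2\varphi}\bigr)\bigl(A(k)-B_j(k)\bigr)\Bigr] \;+\; \chi_0(k)\,\mathcal{E}(A)(k),
\end{equation*}
where $A(k):=\rho(k)T_+(k)^2$, $B_j(k):=\rho(k_j) T_0(k_j)^2[\eta(k_j)(k-k_j)]^{2i\nu(k_j)}$ (the branch cut of $B_j$ placed outside $\Omega_1$), and $\mathcal{E}(A)$ is a smooth extension of $A|_{\mathbb{R}}$ into $\Omega_1$ decaying in $L^2$, built via a Fourier multiplier as in \cite{mch8}. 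An analogous definition with $\bar\rho\, T_-^{-2}$ and conjugated exponents gives $R_2$ in $\Omega_2$. At $\phi_j=0$ (on $\mathbb{R}$) the cosine squared equals $1$, recovering $A(k)$; at $|\phi_j|=\varphi$ (on the piece of $\Sigma_1$ near $k_j$) the cosine squared vanishes and $R_1$ reduces to $B_j(k)$, matching the boundary conditions of Proposition \ref{prod}.

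The pointwise bound (\ref{R}) follows from three observations. Since $\nu(k_j)=-\frac{1}{2\pi}\log(1-|r(k_j)|^2)\in\mathbb{R}$, one has $|B_j(k)|=|\rho(k_j) T_0(k_j)^2|$ bounded. Rewriting $\cos^2=1-\sin^2$ and using the asymptotic $T(k)-T_0(k_j)[\eta(k_j)(k-k_j)]^{\eta(k_j) i\nu(k_j)}=O(|k-k_j|^{1/2})$ displayed just before (\ref{trans2-1}), the bracket in the sum is controlled by a constant plus the $\sin^2\bigl(\tfrac{\pi\phi_j}{2\varphi}\bigr)$ prefactor, producing the first term in (\ref{R}). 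The term $\chi_0\,\mathcal{E}(A)$ inherits the $L^2$-decay of $r\in H^{1,1}(\mathbb{R})$, yielding the tail $(1+\text{Re}\,k^2)^{-1/2}$. For the $\bar\partial$ bound (\ref{dbarRj3}), since $B_j$ is analytic in $\Omega_1$ and $\chi_j$ is smooth, the only singular contribution comes from $\bar\partial\cos^2\bigl(\tfrac{\pi\phi_j}{2\varphi}\bigr)$, which is of order $r_j^{-1}$; combined with $|A(k)-B_j(k)|\lesssim|k-k_j|^{1/2}$ this produces the $|k-k_j|^{-1/2}$ term. The smooth derivatives of $A$ and of $\mathcal{E}(A)$ contribute the $|r'(\text{Re}\,k)|$ term via the $H^1$-regularity of $r$, while $\bar\partial R_\ell=0$ outside $\Omega_1\cup\Omega_2$ by construction.

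The main obstacle is not any single computation but the simultaneous matching at all $n(\xi)$ phase points: the partition of unity $\{\chi_j\}$, the opening angle $\varphi$, and the Fourier-multiplier construction of $\mathcal{E}$ must be arranged so that the implicit constants in (\ref{R}) and (\ref{dbarRj3}) are uniform in $\xi$ over each of the regions II and III, and so that the asymmetric local model $[\eta(k_j)(k-k_j)]^{2i\nu(k_j)}$ is matched with the correct signature $\eta(k_j)$ from (\ref{eta}) on each side of every phase point. With this bookkeeping carried out in the same spirit as \cite{mch8}, Proposition \ref{prod} follows.
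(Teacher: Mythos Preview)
Your proposal is correct and aligns with the paper's approach: the paper itself gives no detailed proof of Proposition~\ref{prod}, stating only that it follows ``in a similar way to \cite{mch8}'', and your sketch is precisely an adaptation of the \cite{mch8} extension-function construction (angular interpolation via $\cos^2(\pi\phi_j/2\varphi)$ between the real-axis data $\rho T_\pm^{\pm2}$ and the local parabolic-cylinder model $B_j$, combined with a partition of unity over the phase points). The one cosmetic difference is your explicit introduction of a Fourier-multiplier extension $\mathcal{E}(A)$ for the $\chi_0$ piece; in \cite{mch8} and in most of the $\bar\partial$-steepest-descent literature the extension away from phase points is obtained more directly by freezing the real part, $A(k)\mapsto \rho(\mathrm{Re}\,k)T(\mathrm{Re}\,k)^2$, which already yields the $|r'(\mathrm{Re}\,k)|$ contribution to $\bar\partial R_\ell$ without an auxiliary multiplier, but either route gives the stated bounds.
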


Define  a new  function
\begin{equation}
R^{(2)}(k,\xi)=\left\{\begin{array}{lll}
\left(\begin{array}{cc}
1 & R_{1}(k,\xi)e^{-2it\theta}\\
0 & 1
\end{array}\right), & k\in \Omega_{1};\\
\\
\left(\begin{array}{cc}
1 & 0\\
R_{2}(k,\xi)e^{2it\theta} & 1
\end{array}\right),  &k\in \Omega_{2};\\
\\
I,  &elsewhere;\\
\end{array}\right.\label{R(2)1}
\end{equation}
where   the functions $R_{j}(k,\xi)$, $j=1,2$ are given by  Proposition \ref{prod}.

Make  a transformation
\begin{equation}\label{trans3-1} M^{(2)}(k):=M^{(2)}(y,t,k)=M^{(1)}(k)R^{(2)}(k),
\end{equation}
then $M^{(2)}(k)$  is a hybrid $\bar{\partial}-$RH problem, which  satisfies the residue conditions \eqref{res1}-\eqref{res4} and  the following   jump condition
\begin{equation}\label{v2}
M^{(2)}_+(k)=M^{(2)}_-(k)V^{(2)}(k),\quad k\in\Sigma^{(2)},
\end{equation}
where
\begin{equation}
V^{(2)}(k)=\left\{\begin{array}{ll}
R^{(2)}(k)|_{\Sigma_{1}},&\text{as } 	k\in\Sigma_{1};\\[12pt]
R^{(2)}(k)^{-1}|_{\Sigma_{2}},&\text{as } 	k\in\Sigma_{2};
\end{array}\right.
\end{equation}
and $M^{(2)}(k)$ also admits  the $\bar{\partial}-$derivative
\begin{equation}
\bar{\partial}M^{(2)}(k)=M^{(2)}(k)\bar{\partial}R^{(2)}(k).
\end{equation}

We decompose  $M^{(2)}(k)$ in the form
\begin{equation}\label{trans4-1}
M^{(2)}(k) =M^{(3)}(k)M^R(k),
\end{equation}
where  $M^R(k)$ is the  following  pure RH problem
\begin{RHP}\label{RHP5}
Find a function  $M^R(k):=M^R(y,t,k)$ with properties:
\begin{itemize}
  \item $M^{R}(k)$ is  meromorphic  in $\mathbb{C}\setminus \Sigma^{(2)}$;
  \item $M^{R}(k)$ satisfies the jump condition:
\begin{equation} M^{R}_+(k)=M^{R}_-(k)V^{(2)}(k),\hspace{0.5cm}k \in \Sigma^{(2)};
\end{equation}
  \item $
	M^{R}(k) = I+\mathcal{O}(k^{-1}),\ k \rightarrow \infty;$
\item For $ k\in \mathbb{C}$,  $\bar{\partial}M^{R}(k)=0$;
\item $M^{R}(k)$ has   the same kind of residue conditions  as \eqref{res1}-\eqref{res4}.
\end{itemize}	
\end{RHP}

 $M^{(3)}(k)$  satisfies  the following   pure $\bar{\partial}$-problem.

\noindent\textbf{ Pure $\bar{\partial}$-problem}. Find a matrix function $  M^{(3)}(k)$ with the following properties:
\begin{itemize}
  \item $M^{(3)}(k)$ is continuous   and has sectionally continuous first partial derivatives in $\mathbb{C}$;
  \item $M^{(3)}(k) \sim I+\mathcal{O}(k^{-1}),\ k \rightarrow \infty;$
  \item For $k\in \mathbb{C},$\
$\bar{\partial}M^{(3)}(k)=M^{(3)}(k)W^{(3)}(k),$
where
\begin{equation}
W^{(3)}(k)=M^{R}(k)\bar{\partial}R^{(2)}(k)M^{R}(k)^{-1}.
\end{equation}
\end{itemize}

\subsection{Contribution from  the pure RH problem}
\qquad

Define
$$ \varrho=\frac{1}{2} \min  \left \lbrace \min_{ j\neq l\in \mathcal{N}  }  |k_l-k_j|, \ \min_{ j \in \mathcal{N}  } |{\rm Im } k_j|,\  \min_{ j \in \mathcal{N}, {\rm Im }\theta(k)=0}  |k_k-k|   \right  \rbrace,$$
and   a  neighborhood of the stationary phase points
\begin{equation}
U(n(\xi))=\underset{j=1,...,n(\xi)}{\bigcup}U_{k_j},\quad U_{k_j}= \left\lbrace k:|k-k_j|< \varrho  \right\rbrace,    \    j=1,...,n(\xi).
\end{equation}
Then outside of $ U(n(\xi))$, the jump matrix $V^{(2)}(k)$  has the following  estimate.
\begin{Proposition}\label{v2pr}
	For $1\leq  p\leq+\infty$, there exists a positive constant $c>0$   such  that the jump matrix    admits the  estimate
	\begin{align}
	\parallel V^{(2)}(k)-I\parallel_{L^{p}(\Sigma_{j}\setminus U(n(\xi)) )}= \mathcal{O}( e^{-c t}), \ \  j=1,2.
	\end{align}

\end{Proposition}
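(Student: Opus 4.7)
\medskip
\noindent\textbf{Proof proposal.}
The plan is to reduce the claim to an exponential smallness of the oscillatory factor away from the stationary points. On each component $\Sigma_\ell$, $\ell=1,2$, the matrix $V^{(2)}(k)-I$ is strictly triangular with a single nonzero off-diagonal entry of the form $\pm R_\ell(k,\xi)e^{\mp 2it\theta(k)}$ (the signs read off from \eqref{R(2)1} and the fact that $V^{(2)}|_{\Sigma_2}=R^{(2)-1}|_{\Sigma_2}$). Proposition \ref{prod} provides the uniform bound $|R_\ell(k,\xi)|\lesssim 1$ on $\bar\Omega_\ell$, so the estimate reduces to a pointwise control of the exponential factor $e^{\mp 2it\theta(k)}$ on $\Sigma_\ell\setminus U(n(\xi))$ together with a uniform integrability tail.

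The heart of the argument is a uniform positive lower bound for the imaginary part of $\theta$ on $\Sigma_\ell\setminus U(n(\xi))$ with the sign built into the construction of the opened contour. I would proceed locally and globally. Locally near each stationary phase point $k_j$, the expansion
\begin{equation*}
\theta(k)-\theta(k_j)=\tfrac{1}{2}\theta''(k_j)(k-k_j)^2+\mathcal{O}((k-k_j)^3),
\end{equation*}
combined with the angle $\varphi$ prescribed in Figure \ref{Fig5} (chosen precisely so that $\text{Im}[\theta''(k_j)(k-k_j)^2]$ has the appropriate sign on $\Sigma_\ell\cap \bar U_{k_j}$), gives $|\text{Im}\,\theta(k)|\gtrsim |k-k_j|^2$ on that neighborhood; since $|k-k_j|\geq \varrho$ on $\Sigma_\ell\setminus U(n(\xi))$, this promotes to $|\text{Im}\,\theta(k)|\gtrsim \varrho^2$ on the portion of $\Sigma_\ell$ adjacent to each $U_{k_j}$. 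Globally, the definition of $\varrho$ was taken to include the clearance from all nonreal branches of $\{\text{Im}\,\theta=0\}$, so the opened contour stays at positive distance from every such branch, and as $|k|\to\infty$ along $\Sigma_\ell$ the leading behavior $\theta(k)\sim \tfrac{\xi}{4}k$ forces $|\text{Im}\,\theta(k)|$ to grow linearly along the tails. Piecing these two regimes together yields a constant $c>0$ with $\mp\text{Im}\,\theta(k)\geq c$ (matching the exponent on $\Sigma_\ell$) on $\Sigma_\ell\setminus U(n(\xi))$, and therefore the pointwise bound $|V^{(2)}(k)-I|\leq C e^{-2ct}$, which already settles the case $p=\infty$.

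For $1\leq p<\infty$ I would split the factor as
\begin{equation*}
e^{-2pt(\mp\text{Im}\,\theta(k))}=e^{-pct}\,e^{-pt\,\phi(k)},\qquad \phi(k):=2(\mp\text{Im}\,\theta(k))-c\geq 0,
\end{equation*}
where $\phi$ is nonnegative on $\Sigma_\ell\setminus U(n(\xi))$ and grows linearly in $|k|$ along the unbounded tails. Then
\begin{equation*}
\|V^{(2)}-I\|^{p}_{L^p(\Sigma_\ell\setminus U(n(\xi)))}\leq C^p e^{-pct}\int_{\Sigma_\ell}e^{-pt\,\phi(k)}\,|dk|,
\end{equation*}
and the remaining integral is uniformly bounded for $t\geq t_0$ (by Laplace's estimate along the tail), so absorbing that constant into a slightly smaller rate gives the claimed $\mathcal{O}(e^{-ct})$ uniformly in $1\leq p\leq \infty$.

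The main obstacle I expect is the global geometric step, namely certifying that $\Sigma_\ell\setminus U(n(\xi))$ lies entirely on the correct signed side of the rather intricate picture of $\{\text{Im}\,\theta=0\}$ in Figure \ref{figtheta}, particularly in Case III with eight stationary points where the critical curves interlace with the phase points on $\mathbb{R}$ and approach the opened contour closely. Once the separation built into the definition of $\varrho$ is shown to guarantee this geometric control, the Taylor expansion at each $k_j$ and the tail asymptotics close out the argument routinely.
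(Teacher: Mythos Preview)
The paper states this proposition without proof, treating it as a standard estimate in the $\bar\partial$-steepest-descent framework (cf.\ \cite{mch8,DZ1,DZ2} among the references). Your argument --- bounding $|R_\ell|$ uniformly via \eqref{R}, then extracting a uniform positive lower bound for $|\mathrm{Im}\,\theta|$ on $\Sigma_\ell\setminus U(n(\xi))$ from the quadratic Taylor expansion at each $k_j$, the built-in clearance $\varrho$ from the level set $\{\mathrm{Im}\,\theta=0\}$, and the linear growth of $\mathrm{Im}\,\theta$ along the unbounded tails, followed by the $L^p$ splitting for integrability --- is precisely the standard route the literature uses for such estimates and is correct.
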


Due to  Proposition \ref{v2pr}, we   construct the solution $M^{R}(k)$ with the following form
\begin{equation}\label{trans5-1}
M^{R}(k)=\left\{\begin{array}{ll}
E(k)M^{r}(k), & k\notin U(n(\xi)),\\[4pt]
E(k)M^{r}(k)M^{lo}(k),  &k\in U(n(\xi)).\\
\end{array}\right.
\end{equation}
where  $M^{r}(k)$ solves the pure soliton  RH problem  by ignoring the jump conditions of RHP \ref{RHP5},
  $M^{lo}(k)$  is a localized model to  match  parabolic cylinder  functions  in a neighborhood of each stationary  point $k_j$,
 and $E(k)$ is an error function computed by using a small-norm RH problem.

\subsubsection{Contribution from discrete spectrum }

\qquad We first deal with the function $M^{r}(k)$, which satisfying  the  RH problem as follows
\begin{RHP}\label{rhp-mr}
Find a matrix-valued function  $M^{r}(k)$ with properties:
\begin{itemize}
  \item $M^{r}(k)$ is analytical  in $\mathbb{C}\setminus \left\lbrace\zeta_n,\bar{\zeta}_n \right\rbrace, n=1,\cdots,4N_1+2N_2$;
  \item $
M^{r}(k) = I+\mathcal{O}(k^{-1}),\quad k \rightarrow \infty;$
\item $M^{r}(k)$ has   the same kind of residue conditions with  $M^{R}(k)$,

\end{itemize}
\end{RHP}
which can be shown to admits a unique solution.

\subsubsection{Contribution from contour near phase points}
\qquad In this part, we analyze the function $M^{lo}(k)$. Denote
$$
\Sigma^{lo}_j=\Sigma^{(2)}\cap U_{k_j}, \ \ \Sigma^{lo}= \Sigma^{(2)}\cap U(n(\xi)).
$$

Now we consider the following RH problem
\begin{RHP}\label{lorhp}															Find a matrix-valued function  $M^{lo}(k)$ satisfying the following properties:
\begin{itemize}
  \item Analyticity: $M^{lo}(k)$ is analytical  in $\mathbb{C}\setminus \Sigma^{lo}$;
  \item Jump condition: $M^{lo}(k)$ has continuous boundary values $M^{lo}_\pm(k)$ on $\Sigma^{lo}$ and																\begin{equation}															M^{lo}_+(k)=M^{lo}_-(k)V^{(2)}(k),\hspace{0.5cm}k \in \Sigma^{lo};														\end{equation}
  \item Asymptotic behaviors:																\begin{align}																		M^{lo}(k) =& I+\mathcal{O}(k^{-1}),\hspace{0.5cm}k \rightarrow \infty.															\end{align}
\end{itemize}
			
\end{RHP}

 The solution $M^{lo}(k)$ of this local  RH problem  can be constructed as follows \cite{HG2009}:
	\begin{align}
	M^{lo}(k)=I+t^{-1/2}\sum_{ j=1 }^{n(\xi)}\frac{A_j(\xi)}{k-k_j} +\mathcal{O}(t^{-1}),
	\end{align}
	where
	\begin{align}
	A_j(\xi)=\frac{i\eta}{2}\left(\begin{array}{cc}
	0 & [M^{pc}_1(k_j)]_{12}\\
	-[M^{pc}_1(k_j)]_{21} & 0
	\end{array}\right),
	\end{align}
 and  $M^{pc}(\zeta)$ is a solution of
  a solvable PC model RH problem with the asymptotics
 	\begin{align}
 	M^{pc}(\zeta  ) =& I+M^{pc}_1\zeta^{-1}+\mathcal{O}(\zeta^{-2}),\hspace{0.5cm}\zeta \rightarrow \infty.
 	\end{align}

\subsubsection{Error estimate}
\qquad In this subsection, we discuss the error function $E(k)$, which satisfies the  RH problem
\begin{RHP}
    Find a matrix-valued function $E(k)$  admitting the following properties:
\begin{itemize}
  \item Analyticity: $E(k)$ is analytical  in $\mathbb{C}\setminus  \Sigma^{ E } $, where
$$\Sigma^{ E }= \partial U(n(\xi))\cup
(\Sigma^{(2)}\setminus U(n(\xi));$$
  \item Asymptotic behaviors:
\begin{align}
&E(k ) \sim I+\mathcal{O}(k^{-1}),\hspace{0.5cm}k \rightarrow \infty;
\end{align}
  \item Jump condition: $E(k )$ has continuous boundary values $E_\pm(k )$ on $\Sigma^{ E }$ satisfying
$$E_+(k )=E_-(k )V^{E}(k),$$
where
\begin{equation}
V^{ E }(k)=\left\{\begin{array}{llll}
M^{ r }(k)V^{(2)}(k)M^{ r }(k)^{-1}, & k\in \Sigma^{(2)}\setminus U(n(\xi),\\[4pt]
M^{ r  }(k)M^{lo}(k)M^{ r  }(k)^{-1},  & k\in \partial U(n(\xi)).
\end{array}\right.
\end{equation}
\end{itemize}
\end{RHP}

The jump matrix $V^{ E }(k)$ has the following estimates
\begin{align}
&\parallel V^{ E }(k)-I\parallel_{L^p}\lesssim
e^{-ct  },   k\in \Sigma^{(2)}\setminus U(n(\xi)),\nonumber\\[4pt]
& | V^{ E }(k)-I|=   \big|M^{ r }(k)^{-1}(M^{lo}(k)-I)M^{ r }(k) \big| = \mathcal{O}(t^{-1/2}), \ k\in  \partial U(n(\xi)).\nonumber
\end{align}

The   existence and uniqueness  of $E(k)$ can  be proved  by using  a  small-norm RH theorem  \cite{DZ1,DZ2}.
 Moreover,   the  estimates  of $E(k)$  can be given as follows:
	\begin{align} E(k;\xi)=E_0+E_1(k-i)+\mathcal{O}((k-i)^2), \ k\to i,
	\end{align}
	where
	\begin{align}
&E_0=I+t^{-1/2}H_0+\mathcal{O}(t^{-1}),\\
&	E_1=t^{-1/2}H_1+\mathcal{O}(t^{-1}),\label{E1t}
	\end{align}
with
	\begin{align}
&H_0=\sum_{ j=1 }^{n(\xi)}\frac{1}{k_j-i} M^{r}(k_j)^{-1}A_j(\xi)M^{r}(k_j).\\
&	H_1=-\sum_{ j=1 }^{n(\xi)}\frac{1}{(k_j-i)^2} M^{r}(k_j)^{-1}A_j(\xi)M^{r}(k_j).
	\end{align}

\subsection{Contribution from  a pure $\bar{\partial}$-problem}

\qquad Now we consider the asymptotic behavior of $M^{(3)}(k)$ defined by \eqref{trans4-1}.
The  solution of the $\bar{\partial}$-problem for  $M^{(3)}(k)$ is equivalent to the integral equation
\begin{equation}
M^{(3)}(k)=I+\frac{1}{\pi}\iint_\mathbb{C}\dfrac{M^{(3)}(s)W^{(3)} (s)}{s-k}dm(s),\nonumber
\end{equation}
 which can be shown to  exist    a unique solution   and
have the following estimate

	\begin{equation}
		M^{(3)}(k)= M^{(3)}_0   +\mathcal{O}((k-i) ), \ k\to i,
	\end{equation}
   where
	\begin{align}
	   &	  M^{(3)}_0= I + \mathcal{O}( t^{-1+\kappa}),  \ \ 0<\kappa<1/2.
	\end{align}

\subsection{Proof of Theorem 1--Case II. }

\qquad Now we   complete the proof of Theorem 1 for the Case II.
Inverting the transformations (\ref{trans1}), (\ref{trans2-1}), (\ref{trans3-1}), (\ref{trans4-1}) and (\ref{trans5-1}), we have
\begin{align}
M(k)=&F(k)M^{(3)}(k)E(k)M^{r}(k)R^{(2)}(k)^{-1}T(k)^{-\sigma_3}. \label{23234}
\end{align}
Take    $k\to i$  along the path out of $ \Omega $, then $R^{(2)}(k)=I$.

Making  asymptotic expansions to  all
functions in (\ref{23234}) at $k= i$, we obtain that
\begin{align}		
M(k)=M_0+M_1(k-i)+\mathcal{O}((k-i)^2),		
	\end{align}		
	where
\begin{align}
&M_0(y,t)=M_0^0(y,t)+M_0^1t^{-1/2}+\mathcal{O}(t^{-1+\kappa}),\\ &M_1(y,t)=M_1^0(y,t)+M_1^1t^{-1/2}+\mathcal{O}(t^{-1+\kappa})\\
\end{align}
with
\begin{align}
&M_0^0(y,t)=F_1T_0^{-\sigma_3},\quad
M_0^1(y,t)=F_1H_0T_0^{-\sigma_3},\nonumber\\
&M_1^0(y,t)=F_2T_0^{-\sigma_3}+F_1T_1^{-\sigma_3},\quad
M_1^1(y,t)=(F_1H_1+F_2H_0)T_0^{-\sigma_3}+F_1H_0T_1^{-\sigma_3}.\nonumber
\end{align}
\qquad  Via  the reconstruction formula \eqref{restr},
we get the long-time asymptotic behavior for the solutions   of the 2-mCH equation \eqref{2mch-p}-\eqref{2mch-q}   as follows

\begin{align}
&( \log p )_x  =   p^{sol}(x,t) +g_{1}t^{-1/2}+\mathcal{O}(t^{-1+\kappa}),\nonumber\\
&( \log q )_x  =  q^{sol}(x,t) +g_{2}t^{-1/2}+\mathcal{O}(t^{-1+\kappa}),\nonumber
\end{align}
with
\begin{equation}
x(y,t)=y +S_2(y,t)+\mathcal{O}(t^{-1+\kappa}),
\end{equation}
where
\begin{align}
&p^{sol}(x,t)=-\frac{G_{31,x}}{G_{11}G_{31}}-1,\quad q^{sol}(x,t)=1-\frac{G_{31,x}}{G_{21}G_{31}},\label{pqr2}\\
&g_1(x,t)=\frac{G_{12}G_{31,x}}{G_{11}^2G_{31}}-\frac{G_{31}G_{32,x}-G_{31,x}G_{32}}{G_{11}G_{31}^2},\label{g1}\\
&g_2(x,t)=\frac{G_{22}G_{31,x}}{G_{21}^2G_{31}}-\frac{G_{31}G_{32,x}-G_{31,x}G_{32}}{G_{21}G_{31}^2},\label{g2}\\
&S_2(y,t)=\log\left(\frac{M_{0,11}^0}{M_{0,22}^0}+\frac{M_{0,22}^0M_{0,11}^1-M_{0,11}^0M_{0,22}^1}{(M_{0,22}^0)^2}t^{-1/2}\right)+\log(q-q_x)-\log(p+p_x),\label{s2}
\end{align}
and
\begin{align*}
&G_{11}:=\frac{M_{0,21}^0M_{0,22}^0}{M_{0,11}^0M_{0,12}^0}-1,\
G_{21}:=\frac{M_{0,11}^0M_{0,12}^0}{M_{0,21}^0M_{0,22}^0}-1,\
G_{31}:=M_{1,12}^0M_{1,21}^0,\\[4pt]
&G_{12}:=\frac{(M_{0,21}^0M_{0,22}^1+M_{0,22}^0M_{0,21}^1)M_{0,11}^0M_{0,12}^0-(M_{0,11}^0M_{0,12}^1+M_{0,12}^0M_{0,11}^1)M_{0,21}^0M_{0,22}^0}{(M_{0,11}^0)^2(M_{0,12}^0)^2},\\[4pt]
&G_{22}:=\frac{(M_{0,11}^0M_{0,12}^1+M_{0,12}^0M_{0,11}^1)M_{0,21}^0M_{0,22}^0-(M_{0,21}^0M_{0,22}^1+M_{0,22}^0M_{0,21}^1)M_{0,11}^0M_{0,12}^0}{(M_{0,21}^0)^2(M_{0,22}^0)^2},\\
&G_{32}:= M_{1,12}^0M_{1,21}^1+M_{1,21}^0M_{1,12}^1.
\end{align*}
Then   Theorem \ref{last} is proved.
\vspace{3mm}	
	
	\noindent\textbf{Acknowledgements}

	This work is supported by  the National Natural Science
	Foundation of China (Grant No. 12271104,  51879045).\vspace{2mm}
	
	\noindent\textbf{Data Availability Statements}
	
	The data that supports the findings of this study are available within the article.\vspace{2mm}
	
	\noindent{\bf Conflict of Interest}
	
	The authors have no conflicts to disclose.

																\end{document}